\newcommand{\R}{\mathbb R} 
\title{On a robust risk measurement approach for capital determination errors minimization}
\author{Marcelo Brutti Righi$^{a}$\footnote{Corresponding author. We would like to thank the editor, anonymous associate editor, and reviewer for constructive comments and suggestions, which
		have been very useful to improve the technical quality of the manuscript. We are grateful for the financial support of FAPERGS (Rio Grande do Sul State
		Research Council) project number 17/2551-0000862-6 and CNPq (Brazilian Research Council) projects number
		302369/2018-0 and 407556/2018-4.}\\  \small{ \href{mailto:marcelo.righi@ufrgs.br}{marcelo.righi@ufrgs.br}} \and  Fernanda Maria M\"{u}ller$^{a}$\\\small{\href{mailto:fernanda.muller@ufrgs.br}{fernanda.muller@ufrgs.br}}
\and Marlon Ruoso Moresco$^{a}$\\\small{\href{mailto:marlon.moresco@ufrgs.br}{marlon.moresco@ufrgs.br}}}
\date{\small{$^{a}$\textit{Business School, Federal University of Rio Grande do Sul, Washington Luiz, 855, Porto Alegre, Brazil, zip 90010-460}}}
\newtheorem{Def}{Definition}
\newtheorem{Thm}{Theorem}
\newtheorem{Prp}{Proposition}
\theoremstyle{definition}
\newtheorem{Exm}{Example}
\theoremstyle{remark}
\newtheorem{Rmk}{Remark}
\begin{document}
	
	\maketitle
	
	\begin{abstract}
 We propose a robust risk measurement approach that minimizes the expectation of overestimation plus underestimation costs. We consider uncertainty by taking the supremum over a collection of probability measures, relating our approach to dual sets in the representation of coherent risk measures.
We provide results that guarantee the existence of a solution and explore the properties of minimizer and minimum as risk and deviation measures, respectively. An empirical illustration is carried out to demonstrate the use of our approach  in capital determination. 
\smallskip
\\
		\textbf{Keywords}:  Uncertainty modeling; risk measures; deviation measures; capital determination.
	\end{abstract}
	
\section{Introduction}\label{introduction}
	
The interest in risk measures with interpretation as capital determination from a theoretical point of view has been raised in mathematical finance and insurance since the seminal paper of \cite{Artzner1999}. 
 From there, an entire stream of literature has proposed and discussed distinct features, including  axiom sets, dual representations, 
mathematical and statistical properties.
We suggest  \cite{follmer2015axiomatic} and \cite{Follmer2016} for a recent review of this theory. 
 See \cite{Emmeretal:2015} for a discussion and comparison of    properties of popular risk measures, which include  variance, Value at Risk (VaR), Expected Shortfall (ES), and Expectile Value at Risk (EVaR).

 Despite the investigations carried out in this regard, there is still no consensus about a definitive set of properties or the best risk measure for practical matters.
 In this context, there is scope for proposing  new risk measurement approaches, such as \cite{Righi2016}, \cite{furman2017gini},  \cite{Righi2018a}, \cite{Bellini2019} and \cite{Pichler2019}. A possible interesting risk measurement process could seek to minimize capital determination errors to reduce the costs linked to it. From the regulatory point of view, risk underestimation, and consequently capital determination underestimation, is the main concern. In this case, capital charges are desirable  to avoid costs from unexpected and uncovered losses.
However, from the perspective of institutions, it is also desirable to reduce the regret costs arising from risk overestimation because the latter reduces profitability.

Based on this perspective, we propose a risk measurement procedure that represents the capital determination for a financial position  $X$ that minimizes the expected value of the sum between costs from risk overestimation (regret for lesser profits) and underestimation (uncovered losses). We measure these two costs in our framework by positive random variables  $G$, for gains, and $L$, for losses. These costs may refer, for instance, to financial rates traded in the market or even as a function of variables of interest. Typically, such costs can be interpreted as the cost of opportunity measured by the rate of a similar investment or even risk free rate of the exceeding capital, in the case of $G$, and the cost of raising money, by borrowing from bank or other debt sources, in the case of underestimation to cover an unexpected loss, in the case of $L$. More specifically, our goal is to minimize for $x\in\mathbb{R}$ the expectation of $(X-x)^{+}G+(X-x)^{-}L$. The term $(X(\omega)-x)^{+}G(\omega)$ represents a situation where the realized outcome of $X$ is better than the capital requirement and the surplus could be invested at cost $G$. Similarly, $(X(\omega)-x)^{-}L(\omega)$ relates to a situation where the capital reserve is not enough to cover the loss, where the difference must be raised at cost $L$.  

Thus, in this paper we have four contributions to both academic literature and the financial industry:
\begin{enumerate}
	\item The type of loss function we consider has not been considered for such purpose. In this sense, \cite{Laeven2004} and \cite{Goovaerts2005} explore a loss function more similar to ${-xG+(X-x)^{-}}$. 	\cite{dhaene2003economic},   \cite{Goovaerts2005}, and  \cite{Goovaerts2010} present, even briefly, a loss function similar to ${-xG+(X-x)^{-}L}$. It is worth to mention also the works of \cite{Zaks2006}, \cite{Dhaene2012}, \cite{xu2012stochastic}, and \cite{xu2013optimal}, among other research on this topic. Such studies also focus on minimizing the mentioned costs in order to obtain the optimal amount of capital. The term $-xG$ is the cost of putting aside the
	initial capital $x$ instead of investing it in a financial instrument
	with returns payoff $G$, which is not a random variable in these studies. This is distinct to regret or overestimation costs as in our framework since $(X-x)^+G$ has not its realization known a priori. In our case, we penalize regret costs only the exceeding value over capital determination. Thus, our focus is more on risk measurement errors. If the determined capital (risk measure) is precisely the monetary value that we need to cover the loss, then there is no reason for penalization. On the other hand, most of these works focus on determining the optimal composition for aggregated capital of a financial firm from its business units. 	We emphasize that such approaches are originally proposed to determine economic capital for insurance analysis. 
	
	\item We provide results that guarantee a solution for the optimization problem linked to our risk measure, develop properties that it fulfills, and characterize the resulting minimum cost as a deviation measure in the sense of \cite{rockafellar2006generalized}. In this sense, our approach considers a risk measure that has random variables, the costs $G$ and $L$, as parameters. With very few exceptions, such as the maximal correlation risk measure in \cite{Ruschendorf2006}, parameters for risk measures are real numbers. This is the case of VaR and ES, for instance, where the quantile significance level $\alpha\in(0,1)$ is the parameter. In fact, when $G=\alpha$ and $L=1-\alpha$, our risk measure coincides with VaR, and its related minimum is a scaled ES deviation from the expectation. This kind of generalization leads to technical difficulties we handle, such as dependence between the financial position $X$ and costs $G$ and $L$. The paper of \cite{Rockafellar2013} relates to risk and deviation measures linked by a common optimization problem but does not consider random variables as parameters. \cite{Bellini2014a} study generalized quantiles as risk measures by minimizing asymmetric loss functions but also do not consider random variables as parameters.
	
	\item Our approach is robust in the sense we consider a supremum of probability measures over our optimization problem. This is a
worst case approach, where the functionals are not sensitive to the choice of some specific probability measure that represents a particular belief about the world. In this sense, we are in concordance to the stream of \cite{Shapiro2017}, \cite{Righi2018b}, \cite{Bellini2018},  and \cite{Guo2018}. Nonetheless, none of them consider the same features we address in our study, such as random variables as parameters and the minimum as a deviation. Moreover, we relate our approach with the dual representation of coherent risk measures, since we are based on a supremum of expectations over probabilities. For model risk purposes, \cite{cont2006model} uses this dual representation scheme for robust expectations regarding pricing contingent claims.
	
\item We provide an illustration considering real financial data with the purpose of showing the practical usefulness of our approach. In this sense, we perform an adaptation of the dual representation of usual discrete probability spaces. We then consider capital requirements determined by our risk measures against usual risk measures applied for this purpose. Results allow us to conclude that capital determination based on typical tail risk measures that the Basel and Solvency accords recommend and by risk measures that also focus on minimizing capital costs that are too punitive, lead to more costs concerning risk measures obtained from our approach.
\end{enumerate}	
	
Regarding structure, the remainder of this paper divides in the following contents: in Section \ref{sec:theory} we expose definitions and results regarding the existence of solutions to the optimization problems in our proposed approach; in Section \ref{sec:prop} we prove results regarding the properties of our risk and deviation measures in relation to financial positions and costs for capital determination errors; in Section \ref{sec:illust} we exhibit an  empirical illustration of our approach for capital determination; 
in Section \ref{conclusion} we summarize and conclude the paper.

\section{Proposed approach}\label{sec:theory}

The content is based on the following notations. Consider the real valued random result $X$ of any asset ($X\geq0$ is a gain, $X<0$ is a loss) that is defined on a probability space $(\Omega,\mathcal{F},\mathbb{P})$. All equalities and inequalities are considered almost surely in $\mathbb{P}$. We define $X^+=\max(X,0)$, $X^-=\max(-X,0)$, and $1_A$ as the indicator function for an event  $A$.

We let $\mathcal{Q}$ denote the set composed of probability measures  $\mathbb{Q}$ defined on $(\Omega,\mathcal{F})$ that are absolutely continuous with respect to $\mathbb{P}$, with Radon-Nikodym derivative $d\mathbb{Q}/d\mathbb{P}$ and $\mathcal{Q}^\prime\subseteq\mathcal{Q}$ a non-empty set.  Moreover, $E_{\mathbb{Q}}[X]=\int_{\Omega}Xd\mathbb{Q}$, $F_{X, \mathbb{Q}}(x)=\mathbb{Q}(X\leq x)$ and $F_{X, \mathbb{Q}}^{-1}(\alpha)=\inf\{x\in\mathbb{R}\colon F_{X, \mathbb{Q}}(x)\geq\alpha\}$ are, respectively, the expected value, the distribution function and the (left) quantile of $X$ under $\mathbb{Q}$. We drop the subscript  when it is regarding  $\mathbb{P}$. 

Let $L^{\infty}(\mathbb{Q}):=L^{\infty}(\Omega,\mathcal{F},\mathbb{Q})$ be the vector space of (equivalent classes of) $\mathbb{Q}-a.s.$ bounded random variables. We write $L^\infty:=L^\infty(\mathbb{P})$. We have that $L^{\infty}_{+}$ and $L^\infty_{++}$ are its cones of non-negative and positive elements, respectively. We denote by $X_n\rightarrow X$ convergence in the $L^\infty$ essential supremum norm, while $\lim\limits_{n\rightarrow\infty}X_n=X$ means $\mathbb{P}$-a.s. convergence.

We recurrently use the following facts for any $\mathbb{Q}\in\mathcal{Q}$ without further mention:
\begin{itemize}
	\item $L^\infty(\Omega,\mathcal{F},\mathbb{P})\subseteq L^\infty (\Omega,\mathcal{F},\mathbb{Q})$.
	\item $\operatorname{ess}\sup_\mathbb{P} X\geq\operatorname{ess}\sup_\mathbb{Q}X$ and $\operatorname{ess}\inf_\mathbb{P} X\leq\operatorname{ess}\inf_\mathbb{Q}X$, $\forall\:X\in L^\infty$. 
	\item if $F_{X}$ is continuous, then also is $F_{X,\mathbb{Q}},\:\forall\:X\in L^\infty$.
	\item  if $X_n\rightarrow X$ or $\lim\limits_{n\rightarrow\infty}X_n=X$ regarding $\mathbb{P}$, then the same is true concerning $\mathbb{Q}$ for any $\{X_n\}\subset L^\infty$ and $X\in L^\infty$.
\end{itemize}

We now define our risk measurement approach as the supremum of minimization problems regarding costs from capital determination errors.

\begin{Def}\label{def:meas} Let $G,L\in L^{\infty}_{++}$. Then:
\begin{enumerate}
\item 	Our risk measure is a functional $R\colon L^{\infty} \rightarrow\mathbb{R}$ defined as:
\begin{equation}
R(X):=R_{G,L,\mathcal{Q}^\prime}(X)=\sup\limits_{\mathbb{Q}\in\mathcal{Q}^\prime}\left\lbrace -\min\left\lbrace\arg\min\limits_{x\in\mathbb{R}} E_{\mathbb{Q}}[(X-x)^{+}G+(X-x)^{-}L]\right\rbrace\right\rbrace \label{averserisk}.
\end{equation}
\item  
	Our deviation measure is a functional $RD\colon L^{\infty} \rightarrow\mathbb{R}_+$ defined as:	
\begin{equation}
RD(X):=RD_{G,L,\mathcal{Q}^\prime}(X)=\sup\limits_{\mathbb{Q}\in\mathcal{Q}^\prime}\left\lbrace \min\limits_{x\in\mathbb{R}} E_{\mathbb{Q}}[(X-x)^{+}G+(X-x)^{-}L]\right\rbrace\label{aversedev}.
\end{equation}
\end{enumerate}
\end{Def}

The negative sign for $R$ is to keep the pattern for losses. Moreover, in Propositions \ref{prp:sol2} and \ref{prp:sol} we prove that our both functionals are in fact finite and, thus, well-defined. It would also be interesting to extend our results to general loss functions $f_1((X-x)^+)G+f_2((X-x)^-)L$, with $f_1$ and $f_2$ beyond the identity function as in \cite{Bellini2014a}. We do not pursue this goal because we want to keep the intuitive meaning of our specific loss function.

\begin{Rmk}
A recently highlighted statistical property is Elicitability, which enables the comparison of competing models in risk forecasting. See \cite{ziegel2016coherence} and the references therein for more details. A functional is elicitable if it is the argmin for expectation of some score function $S\colon \mathbb{R}^2\rightarrow\mathbb{R}_+$ and satisfies certain properties.
For instance, mean and $\alpha$-quantile are elicitable under scores $(x-y)^2$ and $\alpha(x - y)^+ + (1 - \alpha)(x - y)^-$, respectively. However our functionals are not fitted in this class because costs $G$ and $L$ are random variables that are not necessarily functions of the position $X$. A possibility could be to consider a score function defined as $S(x,y)=(x-y)^+g(x)+(x-y)^-l(x)$ with $g,l\colon\mathbb{R}\rightarrow\mathbb{R}$ holding certain properties. However, we do not pursue such framework since our focus is precisely into considering $G$ and $L$ as general random variables that are not strictly dependent to $X$. 
\end{Rmk}

\begin{Rmk}\label{remarkfinite}
It is straightforward that $(X-x)^{+}G+(X-x)^{-}L\in L^\infty$ for any $X\in L^\infty$ and any $x\in\mathbb{R}$.  It would be possible to develop our theory by considering the domain of both $R$ and $RD$ as the $L^1$ vector space of $\mathbb{P}$-integrable random variables since this is a larger space. In such case, we would have to restrict ourselves to probability measures $\mathcal{Q}^\prime$ with $\mathbb{P}$-a.s. bounded Radon-Nikodym derivatives, i.e., $d\mathbb{Q}/d\mathbb{P}\in L^\infty$. Such technicality is in order to guarantee integrability since for any $\mathbb{Q}$ under these conditions we have from H{\"o}lder inequality that \[E_\mathbb{Q}[(X-x)^{+}G+(X-x)^{-}L]\leq \left( E[(X-x)^{+}]\operatorname{ess}\sup G+ E[(X-x)^{-}]\operatorname{ess}\sup L\right) \operatorname{ess}\sup d\mathbb{Q}/d\mathbb{P},\] which is finite. Most results we expose are easily adaptable to such framework.  
\end{Rmk}

We now expose a formal result that guarantees our minimization problems have a solution. Note that $(X-x)^{+}\geq(X-\operatorname{ess}\inf X)^{+}$ for $x<\operatorname{ess}\inf X$, and $(X-x)^{-}\geq(X-\operatorname{ess}\sup X)^{-}$ for $x>\operatorname{ess}\sup X$. Thus, the minimization problem would not be altered when we take optimization over the compact interval $[\operatorname{ess}\inf X, \operatorname{ess}\sup X]$ instead of the whole real line because we consider random variables in $L^{\infty}$.

\begin{Prp}\label{prp:prob}
Under notation of Definition \ref{def:meas}, let $B_\mathbb{Q}:=\arg\min\limits_{x\in\mathbb{R}} E_{\mathbb{Q}}[(X-x)^{+}G+(X-x)^{-}L]$. Then for each $\mathbb{Q}\in\mathcal{Q}$:
\begin{enumerate}
	\item $B_\mathbb{Q}$ is a closed interval.
	\item $x\in B_\mathbb{Q}$ if and only if $x$ satisfies the first order condition given by 
	\[\begin{cases}
E_\mathbb{Q}[G1_{\{X\geq x\}}]\geq E_\mathbb{Q}[L1_{\{X< x\}}]\\
E_\mathbb{Q}[G1_{\{X>x\}}]\leq E_\mathbb{Q}[L1_{\{X\leq x\}}].
	\end{cases}\]
	\item if $F_{X,\mathbb{Q}}$ is  continuous in $B_\mathbb{Q}$, then $B_\mathbb{Q}$ is a singleton if and only if  $F_{X,\mathbb{Q}}$ is strictly increasing in $B_\mathbb{Q}$.
\end{enumerate} 
\end{Prp}

\begin{proof}
 Fix $\mathbb{Q}\in\mathcal{Q}$. Then:
\begin{enumerate}

\item Let $f_{X,\mathbb{Q}}:  \R \rightarrow \R$ be defined as  $f_{X,\mathbb{Q}} (x):= E_{\mathbb{Q}}[(X-x)^{+}G+(X-x)^{-}L] $. Clearly, $f_{X,\mathbb{Q}}$ is finite, due to  Remark  \ref{remarkfinite}, convex and, hence a continuous function.  Note that  $f_{X,\mathbb{Q}}$ is  proper and level bounded. Thus,  $\inf_{x\in\mathbb{R}}  f_{X,\mathbb{Q}}(x)$ is finite and the set $\arg\min_{x\in\mathbb{R}}  f_{X,\mathbb{Q}}(x)$ is non-empty and compact. Moreover, since $f_{X,\mathbb{Q}}$ is convex, $B_\mathbb{Q}$ is an interval.

\item We have to solve the first order condition for $x$ in order to obtain the argument that minimizes the expression. Note that $ f_{X,\mathbb{Q}}$ is convex. Then $y\in\mathbb{R}$ is a minimizer if and only if
 \[0\in\left[\dfrac{\partial^-  f_{X,\mathbb{Q}}}{\partial x}(y),\dfrac{\partial^+  f_{X,\mathbb{Q}}}{\partial x}(y)\right].\] 
 Thus, we have from Dominated convergence that both \begin{align*}
\dfrac{\partial^- f_{X,\mathbb{Q}}}{\partial x}&=E_{\mathbb{Q}}\left[ \dfrac{\partial^- [(X-x)^{+}G+(X-x)^{-}L]}{\partial x}\right]\\
&= E_{\mathbb{Q}}[-G1_{\{X\geq x\}}+L1_{\{X<x\}}],
\end{align*}  and \begin{align*}
\dfrac{\partial^+ f_{X,\mathbb{Q}}}{\partial x}&=E_{\mathbb{Q}}\left[ \dfrac{\partial^+ [(X-x)^{+}G+(X-x)^{-}L]}{\partial x}\right]\\
&= E_{\mathbb{Q}}[-G1_{\{X>x\}}+L1_{\{X\leq x\}}].
\end{align*} By taking $\dfrac{\partial^-  f_{X,\mathbb{Q}}}{\partial x}\leq0$ and $\dfrac{\partial^+  f_{X,\mathbb{Q}}}{\partial x}\geq0$, we get the claim.

\item Note that when $F_{X,\mathbb{Q}}$ is continuous in $x\in\mathbb{R}$ we have $\mathbb{Q}(X=x)=0$. Then, the first order condition of (ii) is equivalent to $E_\mathbb{Q}[G1_{\{X\geq x\}}]= E_\mathbb{Q}[L1_{\{X\leq x\}}]$, which can be rewritten as $E_\mathbb{Q}[G]=E_\mathbb{Q}[(G+L)1_{\{X\leq x\}}]$. For the if part, 
assume that there exist $x,y\in B_\mathbb{Q}$ and $x<y$, i.e., $ B_\mathbb{Q}$ is not a singleton. Since  $F_{X,\mathbb{Q}}$ is strictly increasing in  $B_\mathbb{Q}$, we get that $\mathbb{Q}(y\geq X>x)>0$ and, thus,  $\mathbb{Q}(1_{\{y\geq X>x\}}>0)>0$. Now, from $G,L >0 $ we have $ E_\mathbb{Q}[(G+L)1_{\{y\geq X> x\}}] >0$. Therefore, 
\[
E_\mathbb{Q}[G]=E_\mathbb{Q}[(G+L)1_{\{X\leq x\}}]= E_\mathbb{Q}[(G+L)(1_{\{X\leq y\}}-1_{\{y\geq X> x\}})] <  E_\mathbb{Q}[(G+L)1_{\{X\leq y\}}].\]
Thus, $y $ does not  fulfill the first order condition, a contradiction. The reasoning is analogous for $y<x$. Hence, $B_\mathbb{Q}$ is a singleton, as desired. For the only if part,
let $x \in B_\mathbb{Q}$ and $F_{X,\mathbb{Q}}$ be continuous in $B_\mathbb{Q}$ but constant in some half ball $[x,x+r)$ with radius $r$ around $x$, i.e., $F_{X,\mathbb{Q}}$ is not strictly increasing in $B_\mathbb{Q}$.  Then, for any $x<y<x+r$, we have that $1_{\{X\leq x\}} = 1_{\{X\leq y\}}$. Thus, it is direct that
\[
E_\mathbb{Q}[G]=E_\mathbb{Q}[(G+L)1_{\{X\leq x\}}]=E_\mathbb{Q}[(G+L)1_{\{X\leq y\}}].
\]
Therefore, $y$ also fulfills the first order condition and $B_\mathbb{Q}$ is not a singleton. 

\end{enumerate}
\end{proof}

\begin{Rmk}
Equivalent forms for the first order condition of item (ii) are available, such as $E_\mathbb{Q}[(G+L)1_{\{X\leq x\}}]\geq E_\mathbb{Q}[G]\geq E_\mathbb{Q}[(G+L)1_{\{X< x\}}]$. We have in fact used it in the proof of item (iii) and will keep using it without further mention. Moreover, continuity of $F_{X,\mathbb{Q}}$ is not a necessary condition for the argmin set to be a singleton. For instance, Take $G := L =
0.5$ and $X$ with $\mathbb{Q}(X = k) = 0.5, \mathbb{Q}(X > k) = \mathbb{Q}(X < k) = 0.25$ for some $k\in\mathbb{R}$. Then $k=F^{-1}_{X,\mathbb{Q}}(0.5)$ is the unique minimizer despite that $F_{X,\mathbb{Q}}$ is not continuous at $k$.
\end{Rmk}

We now derive a solution in terms of quantile functions under the assumption that position $X$ is independent of both costs $G$ and $L$ regarding any probability measure in $\mathcal{Q}^\prime$. Note that this result rules out convexity from $R$ since VaR is not a convex risk measure.
\begin{Prp}\label{prp:rob}
	Under the notation in Definition \ref{def:meas}, let $X\in L^{\infty}$ be independent of both $G$ and $L$ regarding any probability measure in $\mathcal{Q}^\prime$ and $\alpha(\mathbb{Q})=\dfrac{E_\mathbb{Q}[G]}{E_\mathbb{Q}[G+L]}$. Then:
	\begin{enumerate}
		\item \begin{equation}\label{eq:sol}
		R(X)=\sup_{\mathbb{Q}\in\mathcal{Q}^\prime}\left\lbrace  -F_{X,\mathbb{Q}}^{-1} \left(\alpha(\mathbb{Q})\right)\right\rbrace .
		\end{equation} 
		\item \begin{equation}\label{eq:sol2}
		RD(X)=\sup_{\mathbb{Q}\in\mathcal{Q}^\prime}\left\lbrace E_\mathbb{Q}[G]\left( E_\mathbb{Q}[X]-\frac{1}{\alpha(\mathbb{Q})}\int_{0}^{\alpha(\mathbb{Q})}F^{-1}_{X,\mathbb{Q}}(s)ds\right) \right\rbrace .
		\end{equation} 
	\end{enumerate}

\end{Prp}

\begin{proof}

	\begin{enumerate}
		\item We have that the first order condition in Proposition \ref{prp:prob} becomes for each $\mathbb{Q}\in\mathcal{Q}^\prime$	\[\mathbb{Q}(X< x)\leq\alpha(\mathbb{Q})\leq\mathbb{Q}(X\leq x),\] which is valid since $E_\mathbb{Q}[G+L]>0$. Thus, for $x\in B_\mathbb{Q}$, $x$ is a $\alpha(\mathbb{Q})$ - quantile of $X$. We then get that the minimum value $x\in\mathbb{R}$ that satisfies such condition is \[\min\left\lbrace x\in\mathbb{R}\colon F_{X,\mathbb{Q}}
		(x)\geq\alpha(\mathbb{Q}) \right\rbrace=F_{X,\mathbb{Q}}^{-1} \left(\alpha(\mathbb{Q}) \right).\] Applying the supremum over $\mathcal{Q}^\prime$ leads to the desired result.
		\item By letting $R=F_{X,\mathbb{Q}}^{-1}\left(\alpha(\mathbb{Q})\right)$ we see that the minimum becomes, for each $\mathbb{Q}\in\mathcal{Q}^\prime$:
		\begin{align*}
		&E_\mathbb{Q}\left[\left(X-R\right)^{+}G+\left(X-R\right)^{-}L\right]\\
		=&E_\mathbb{Q}\left[\left(X-R\right)G+\left(X-R\right)^{-}G+\left(X-R\right)^{-}L\right]\\
		=&E_\mathbb{Q}[G]\left(E_\mathbb{Q}[X]-R+E_\mathbb{Q}\left[\left(R-X\right)1_{\{X\leq R\}}\right]\frac{1}{\alpha(\mathbb{Q})}\right)\\
		=&E_\mathbb{Q}[G]\left(E_\mathbb{Q}[X]-\frac{1}{\alpha(\mathbb{Q})}\left( \alpha(\mathbb{Q}) R+E_\mathbb{Q}\left[\left(X-R\right)1_{\{X\leq R\}}\right]\right) \right)\\
		=&E_\mathbb{Q}[G]\left(E_\mathbb{Q}[X]-\left( R+\frac{1}{\alpha(\mathbb{Q})}E_\mathbb{Q}\left[-\left(R-X\right)^+\right] \right)\right)\\
		=&E_\mathbb{Q}[G]\left( E_\mathbb{Q}[X]-\frac{1}{\alpha(\mathbb{Q})}\int_{0}^{\alpha(\mathbb{Q})}F^{-1}_{X,\mathbb{Q}}(s)ds\right).
		\end{align*}
		The equivalence between expectation and integral in the last step is due to the optimization formula for Expected Shortfall (also known as Conditional Value at Risk or Average Value at Risk),  see Proposition 4.51 of \cite{Follmer2016} for instance. By applying the supremum over $\mathcal{Q}^\prime$ we get the desired claim. 
	\end{enumerate}

\end{proof}

\begin{Rmk}
	When $G=\beta$ and $L=1-\beta$, we obtain as solution the $\beta$-quantile (VaR). We recommend \cite{Bellini2014a} for details. 
	 When $F_{X,\mathbb{Q}}$ is continuous at $\alpha(\mathbb{Q})$ for any $\mathbb{Q}\in\mathcal{Q}^\prime$, we get \[RD(X)=\sup_{\mathbb{Q}\in\mathcal{Q}^\prime}\left\lbrace -E_{\mathbb{Q}}[G]E_{\mathbb{Q}}\left[X-E_{\mathbb{Q}}[X]\bigg|X\leq F_{X,\mathbb{Q}}^{-1} \left( \alpha(\mathbb{Q})\right)\right]\right\rbrace,\] which is directly linked to the tail mean (ES) deviation by representing the distance between regular and tail expectations. Moreover, when $\mathcal{Q}^\prime=\mathcal{Q}$ we may end up to be in a situation where $G$ and $L$ are constants. Nonetheless, for the general situation where $\mathcal{Q}^\prime$ can be any strict subset of $\mathcal{Q}$ (a particular example is a singleton $\mathcal{Q}^\prime=\{\mathbb{Q}\}$) we are not necessarily limited to $G$ and $L$ constants. For instance, let $A,B\in\mathcal{F}$ be such that $\mathbb{Q}(A\cap B)=\mathbb{Q}(A)\mathbb{Q}(B)$ and $0<\mathbb{Q}(A),\mathbb{Q}(B)<1$ for any $\mathbb{Q}\in\mathcal{Q}^\prime$. Also, let $X:=1_A$ and $G:=L:=1_B+k>0$ for some $k\in(0,\infty)$. Then, regarding any $\mathbb{Q}\in\mathcal{Q}^\prime$, $X$ is independent of both $G$ and $L$, while $G$ and $L$ are not constants. 
\end{Rmk}

Regarding the ambiguity set, one can choose $\mathcal{Q}^\prime\subseteq\mathcal{Q}$ 
in an ad hoc sense according to some \textit{a priori} established risk aversion parameter. Another possibility is to consider those probability measures that represent beliefs absolutely continuous inside some distance from a nominal measure $\mathbb{P}$, as in \cite{Shapiro2017}. It is straightforward to note that $\mathcal{Q}^\prime_1\subseteq\mathcal{Q}^\prime_2\subseteq\mathcal{Q}$ implies both $R_{\mathcal{Q}^\prime_1}\leq R_{\mathcal{Q}^\prime_2}$ and $RD_{\mathcal{Q}^\prime_1}\leq RD_{\mathcal{Q}^\prime_2}$. 

We now expose a particular choice for $\mathcal{Q}^\prime$ linked to dual representations of coherent risk measures (sub-linear expectations as in \cite{Sun2017}). These kind of functionals possess the properties of Monotonicity, Translation Invariance, Positive Homogeneity and Convexity.  We now present a formal result that guarantees the dual representation of  coherent risk measures in $L^{\infty}$. 

\begin{Thm}[\cite{Artzner1999}, \cite{Delbaen2002}]	A functional $\rho : L^{\infty}\rightarrow\mathbb{R}$ is a lower semi-continuous in $\mathbb{P}-a.s.$ sense for bounded sequences coherent risk measure if and only if it can be represented as
	\begin{equation*}
	\rho(Y)=\sup\limits_{\mathbb{Q}\in\mathcal{Q}_{\rho}} E_{\mathbb{Q}} [-Y],\:\forall\;Y\in L^{\infty},
	\end{equation*}
	where  $\mathcal{Q}_{\rho}\subseteq\mathcal{Q}$ is a closed and convex non-empty set, called dual set of $\rho$. If $\Omega$ is finite, then the same is true without lower semi-continuity in $\mathbb{P}-a.s.$ sense.
\end{Thm}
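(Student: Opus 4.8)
The reverse implication is a routine verification, so I would dispatch it first. Assuming $\rho(Y)=\sup_{\mathbb{Q}\in\mathcal{Q}_\rho}E_\mathbb{Q}[-Y]$ with $\mathcal{Q}_\rho\subseteq\mathcal{Q}$ nonempty: Monotonicity holds because $X\le Y$ gives $E_\mathbb{Q}[-X]\ge E_\mathbb{Q}[-Y]$ for each $\mathbb{Q}$, and one takes the supremum; Translation Invariance and Positive Homogeneity follow from linearity of each $E_\mathbb{Q}$ together with $E_\mathbb{Q}[C]=C$ and $E_\mathbb{Q}[\lambda Y]=\lambda E_\mathbb{Q}[Y]$ for $\lambda\geq0$; Sub-additivity follows from $E_\mathbb{Q}[-(X+Y)]=E_\mathbb{Q}[-X]+E_\mathbb{Q}[-Y]\le\rho(X)+\rho(Y)$ after taking the supremum over $\mathbb{Q}$; and Fatou continuity holds because, for $|Y_n|\le Z\in L^\infty$ and $Y_n\xrightarrow{\mathbb{P}}Y$, dominated convergence under each $\mathbb{Q}\ll\mathbb{P}$ (the constant $\|Z\|_\infty$ is $\mathbb{Q}$-integrable) gives $E_\mathbb{Q}[-Y]=\lim_n E_\mathbb{Q}[-Y_n]\le\liminf_n\rho(Y_n)$, and the right-hand side does not depend on $\mathbb{Q}$, so passing to the supremum yields $\rho(Y)\le\liminf_n\rho(Y_n)$.

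For the forward implication I would work with the acceptance set $\mathcal{A}_\rho=\{Y\in L^\infty:\rho(Y)\le0\}$. Coherence makes $\mathcal{A}_\rho$ a convex cone containing $L^\infty_+$ (from Monotonicity and $\rho(0)=0$), and Translation Invariance gives the recovery formula $\rho(Y)=\inf\{m\in\mathbb{R}:Y+m\in\mathcal{A}_\rho\}$, so it suffices to describe $\mathcal{A}_\rho$ by duality. The plan is: (i) show $\mathcal{A}_\rho$ is closed in the weak-$*$ topology $\sigma(L^\infty,L^1)$; (ii) fix $Y\in L^\infty$ and $\epsilon>0$; since $\rho(Y+\rho(Y)-\epsilon)=\epsilon>0$ the point $Y+\rho(Y)-\epsilon$ is not in $\mathcal{A}_\rho$, so Hahn--Banach separation in the dual pair $(L^\infty,L^1)$ produces $Z\in L^1$ with $E[(Y+\rho(Y)-\epsilon)Z]>\sup_{X\in\mathcal{A}_\rho}E[XZ]$; (iii) because $\mathcal{A}_\rho$ is a cone that supremum is $0$, so $E[XZ]\le0$ for all $X\in\mathcal{A}_\rho$, and since $L^\infty_+\subseteq\mathcal{A}_\rho$ one gets $Z\le0$, with $E[Z]<0$ (else $Z=0$, contradicting the strict inequality), so $d\mathbb{Q}/d\mathbb{P}:=Z/E[Z]$ defines a probability measure $\mathbb{Q}\ll\mathbb{P}$; (iv) dividing the separating inequality by $E[Z]<0$ (reversing it) gives $E_\mathbb{Q}[-Y]>\rho(Y)-\epsilon$, hence $\sup_{\mathbb{Q}\in\mathcal{Q}_\rho}E_\mathbb{Q}[-Y]\ge\rho(Y)$ after letting $\epsilon\to0$, where $\mathcal{Q}_\rho=\{\mathbb{Q}\ll\mathbb{P}:E_\mathbb{Q}[-Y]\le\rho(Y)\ \forall Y\in L^\infty\}$; the opposite inequality is immediate, since for any $\mathbb{Q}\in\mathcal{Q}_\rho$ and any $Y$ one has $Y+\rho(Y)\in\mathcal{A}_\rho$ and hence $E_\mathbb{Q}[-Y]\le\rho(Y)$. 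Finally $\mathcal{Q}_\rho$ is an intersection of closed half-spaces in $L^1$, hence convex and norm-closed, and nonempty by (iv). (Equivalently one can route this through Fenchel--Moreau: $\sigma(L^\infty,L^1)$-lower semicontinuity plus Positive Homogeneity make the conjugate $\rho^\ast$ the indicator of a closed convex set, and Monotonicity and Translation Invariance identify that set with $\{-\,d\mathbb{Q}/d\mathbb{P}:\mathbb{Q}\in\mathcal{Q}_\rho\}$.)

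The main obstacle is step (i): upgrading the sequential Fatou property to weak-$*$ closedness of the convex set $\mathcal{A}_\rho$. I would invoke the Krein--\v Smulian theorem to reduce this to weak-$*$ closedness of each truncation $\mathcal{A}_\rho\cap\{\,|Y|\le c\,\}$. On such a norm-bounded set a weak-$*$ convergent sequence $Y_n\to Y$ is in particular weakly convergent in $L^2$, so by Mazur's lemma (or Koml\'os) suitable convex combinations $\tilde Y_m$ of the $Y_n$ converge in $L^2$, hence in probability, to $Y$, remain bounded by $c$, and lie in $\mathcal{A}_\rho$ by convexity; Fatou continuity then yields $\rho(Y)\le\liminf_m\rho(\tilde Y_m)\le0$, i.e.\ $Y\in\mathcal{A}_\rho$. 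If one does not wish to assume separability of $L^1$ (so that bounded weak-$*$ sets are metrizable and sequences suffice), the same argument runs with nets and appropriate subnets. Once (i) is established, steps (ii)--(iv) are standard separation arguments together with careful bookkeeping of signs.
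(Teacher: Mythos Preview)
The paper does not supply its own proof of this theorem; it is stated with attribution to \cite{Delbaen2002} and used as a black box, so there is nothing in the paper to compare against. Your argument is essentially the standard Delbaen proof: reduce the dual representation to weak-$*$ closedness of the acceptance cone, obtain that closedness from the Fatou property via Krein--\v Smulian, and then separate with Hahn--Banach. The bookkeeping in your steps (ii)--(iv) is correct, and the reverse implication is handled cleanly.

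One technical point worth tightening in step~(i): Krein--\v Smulian asks for weak-$*$ closedness of $\mathcal{A}_\rho\cap\{\lVert Y\rVert_\infty\le c\}$, not merely sequential closedness, and on a general (non-separable) probability space the bounded weak-$*$ topology need not be metrizable. Your parenthetical ``same argument with nets'' is a bit optimistic, since Mazur's lemma does not pass to nets without further work. The clean route---and the one Delbaen actually uses---is to show directly that for a convex set $C\subseteq L^\infty$, having $C\cap\{\lVert Y\rVert_\infty\le c\}$ closed under bounded $\mathbb{P}$-a.s.\ (or in-probability) convergence for every $c>0$ already forces weak-$*$ closedness of $C$; this is a consequence of Krein--\v Smulian together with the fact that on norm-bounded sets the weak-$*$ closure of a convex set coincides with its closure in $L^1$ (equivalently, in probability). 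With that lemma in hand your Fatou hypothesis applies immediately and no net argument is needed.
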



\begin{Exm}\label{ex:risk}
	Possible, but not limited, choices of $\rho$ are:
	\begin{itemize}
		\item Expected Loss (EL): This risk measure is defined as $EL(Y)=E[-Y]$. It is the most parsimonious one, indicating the expected value (mean) of a loss. Its dual set is a singleton $\mathcal{Q}_\rho=\{\mathbb{P}\}$, i.e., only consider the basic belief. 
		
		\item Mean plus Semi-Deviation (MSD): This risk measure is defined as $MSD^{\beta}(Y)=-E[Y] + \beta \sqrt{E[((Y-E[Y])^{-})^2]}, 0\leq\beta\leq1$. The advantages of this risk measure are its simplicity and financial meaning.  $\mathcal{Q}_{MSD^\beta}=\left \lbrace\mathbb{Q} \in \mathcal{Q} : \frac{d\mathbb{Q}}{d\mathbb{P}} =1+ \beta( V-E[V]), V\geq0, E[|V|^2]=1 \right\rbrace$ represents its dual set. 

		\item Expected Shortfall (ES): This risk measure is defined as $ES^{\alpha}(Y)=-\frac{1}{\alpha}\int_{0}^\alpha F^{-1}_X(s)ds, 0<\alpha<1$. It represents the expected value of a loss, given it is beyond the  $\alpha$ - quantile of interest. Its dual set is $\mathcal{Q}_{ES^\alpha}=\left\lbrace\mathbb{Q}\in\mathcal{Q} : \frac{d\mathbb{Q}}{d\mathbb{P}}\leq\frac{1}{\alpha} \right\rbrace$. ES is the most utilized coherent risk measure, being the basis of many representation theorems in this field.
		
		\item Expectile Value at Risk (EVaR): This measure links to the concept of an expectile, given by  $EVaR^{\alpha}(Y)=-\arg\min\limits_{x\in\mathbb{R}} E[\alpha((X-x)^+)^2+(1-\alpha)((X-x)^-)^2],\:0<\alpha\leq0.5$. Its dual set is $\mathcal{Q}_{EVaR^\alpha}=\left\lbrace\mathbb{Q}\in\mathcal{Q} \colon\:\exists\: a>0,\: a\leq\frac{d\mathbb{Q}}{d\mathbb{P}}\leq a\frac{1-\alpha}{\alpha} \right\rbrace$. EVaR is the only coherent risk measure, beyond EL, that possesses the property of elicitability. 
		
		\item Maximum loss (ML): This is an extreme risk measure that has dual set $\mathcal{Q}_{ML}=\mathcal{Q}$, i.e., all the considered beliefs. We define it as  $ML(Y)=-\operatorname{ess}\inf Y $. Such a risk measure leads to the more protective situations since  $ML(Y)\geq\rho(Y)$ for any coherent risk measure $\rho$.
	\end{itemize} 
\end{Exm}

Formulations \eqref{averserisk} and \eqref{aversedev} can be considered in situations with   $\mathcal{Q}^\prime=\mathcal{Q}_{\rho}$, where $\rho$ is a coherent risk measure. We define both $R_{\mathcal{Q}_\rho}:=R_{\rho}$ and $RD_{\mathcal{Q}_\rho}:=RD_{\rho}$ in order to ease notation. The next Proposition presents an alternative formulation for $RD$.

\begin{Prp}\label{prp:risk}
	Let $\mathcal{Q}^\prime=\mathcal{Q}_\rho$, where $\rho:L^\infty\rightarrow\mathbb{R}$ is a lower semi-continuous in $\mathbb{P}-a.s.$ sense for bounded sequences coherent risk measure. Then:
	\begin{equation}
		RD_{\rho}(X)=\min\limits_{x\in\mathbb{R}}\rho(-(X-x)^{+}G-(X-x)^{-}L),\:\forall\:X\in L^\infty\label{eq:rho2}.
		\end{equation}
\end{Prp}

\begin{proof}
 The result follows from the Sion's minimax theorem, see \cite{Sion1958}, because the map $(x,\mathbb{Q})\rightarrow E_\mathbb{Q}[(X-x)^+G+(X-x)^-L]$ has the needed continuity and quasi-convex properties, $\mathcal{Q}_\rho$ is convex, and the optimization over $x\in\mathbb{R}$ can be done in a compact interval since $X\in L^\infty$. As we are considering negative results as losses, we need to correct the sign inside  $\rho(\cdot)$.
\end{proof}

\section{Main properties}\label{sec:prop}
	
 We now explore the main properties of our risk and deviation measures. We begin with those in relation to the position $X$, which is the praxis in risk and deviation measures literature.

	\begin{Prp}\label{prp:sol2}
	Let $R:L^{\infty} \rightarrow\mathbb{R}$ be as in \eqref{averserisk}. Then it has the following properties:
	\begin{enumerate}
		\item Monotonicity: if $X\leq Y$, then $R(X)\geq R(Y), \:\forall \;X,Y \in L^{\infty}$.
		\item Translation Invariance: $R(X+C)=R(X)-C,\:\forall \;X\in L^{\infty},\forall\:C\in\mathbb{R}$.
		\item Positive Homogeneity: $R(\lambda X)=\lambda R(X), \forall \;X\in L^{\infty}, \forall \:\lambda\geq 0$.
		\item Lipschitz continuity. 
		\item Lower semi-continuity regarding $\mathbb{P}-a.s.$ sense if $G,L$ are independent of any $X\in L^\infty$ for any $\mathbb{Q}\in\mathcal{Q}^\prime$. If in addition $\mathcal{Q}^\prime$ is a singleton, then it is continuous in $\mathbb{P}-a.s.$ sense.
		\item Acceptance set: $R(X)=\inf\left\lbrace m\in\mathbb{R}\colon X+m\in \mathcal{A}_{R}\right\rbrace ,\:\forall\:X\in L^\infty$, where  \begin{align*}
\mathcal{A}_{R}&=\{X\in L^{\infty} : R(X)\leq 0\}\\&= \left\lbrace X \in L^\infty:  X\in C_\mathbb{Q}(x)\: \forall \; x<0\:\forall\:\mathbb{Q}\in\mathcal{Q}^\prime \right\rbrace=\bigcap_{\mathbb{Q}\in\mathcal{Q}^\prime}\bigcap_{x<0}C_\mathbb{Q}(x),
		\end{align*}
with $C_\mathbb{Q}(x)=\left( \left\lbrace X\in L^\infty\colon E_\mathbb{Q}[(G+L)1_{\{X\leq x\}}]\geq E_\mathbb{Q}[G]\geq E_\mathbb{Q}[(G+L)1_{\{X< x\}}]\right\rbrace\right) ^c$ 
			\end{enumerate}

\end{Prp}

\begin{proof}
	\begin{enumerate}
		\item Fix $\mathbb{Q}\in\mathcal{Q}^\prime$ and let $g,h\colon L^\infty\times\mathbb{R}\rightarrow\mathbb{R}$ be as $g(X,x)=E_\mathbb{Q}[G1_{\{X\geq x\}}-L1_{\{X< x\}}]$ and $h(X,x)=E_\mathbb{Q}[G1_{\{X>x\}}-L1_{\{X\leq x\}}]$. We have, from the definition of indicator functions, that $h$ and $g$ are non-decreasing in the first argument and non-increasing in the second. Now, let $X,Y\in L^\infty$ with $X\leq Y$. Then $h(X,x)\leq h(Y,x)$ for any $x\in\mathbb{R}$. Furthermore, notice that if $g(X,y)<0$ and $g(X,z)\geq0$, then $z\leq y$. Thus  \[\inf\{x\in\mathbb{R}\colon h(X,x)\leq0\}=\inf\{x\in\mathbb{R}\colon g(X,x)\geq 0,h(X,x)\leq0\}.\] Then, we get from the first order condition of Proposition \ref{prp:prob} that
		\begin{align*}
		x^{*}&=\min\left\lbrace\arg\min\limits_{x\in\mathbb{R}} E_{\mathbb{Q}}[(X-x)^{+}G+(X-x)^{-}L]\right\rbrace=\inf\{x\in\mathbb{R}\colon h(X,x)\leq0\},\\
		y^{*}&=\min\left\lbrace\arg\min\limits_{x\in\mathbb{R}} E_{\mathbb{Q}}[(Y-x)^{+}G+(Y-x)^{-}L]\right\rbrace=\inf\{x\in\mathbb{R}\colon h(Y,x)\leq 0\}.
		\end{align*} Note that such expressions are well defined because, from Proposition \ref{prp:prob}, the argmin set is a closed interval. We then must have $x^{*}\leq y^{*}$ since $\{x\in\mathbb{R}\colon h(Y,x)\leq 0\}\subseteq \{x\in\mathbb{R}\colon h(X,x)\leq 0\}$. By multiplying both sides by $-1$ and taking the supremum over $\mathcal{Q}^\prime$ we get the claim.
		\item Directly obtained from the definition in \eqref{averserisk} by a change of variables.
		\item Analogous to (ii). Note that $R(0)=0$.
		\item  Monotonicity plus Translation Invariance implies Lipschitz continuity regarding $L^\infty$ norm, see Corollary 2 of \cite{delbaen2012monetary}, for instance. In particular, 1 is a valid Lipschitz constant. Thus, we then have for any $X\in L^\infty$ that $|R(X)|=|R(X)-R(0)|\leq\operatorname{ess}\sup |X-0|=\operatorname{ess}\sup |X|<\infty.$ Hence, $R$ is finite and well defined. 
		\item Let $\alpha=\frac{E_\mathbb{Q}[G]}{E_\mathbb{Q}[G+L]}$. Proposition \ref{prp:rob} implies that 	$R(X)=-\inf_{\mathbb{Q}\in\mathcal{Q}^\prime} F_{X,\mathbb{Q}}^{-1} \left( \alpha\right)$ in this case. For $\lim\limits_{n\rightarrow\infty}X_n=X,\:\{X_n\}\subset L^\infty$ we have $F^{-1}_{X,\mathbb{Q}}\left(\alpha\right) = \lim\limits_{n\rightarrow\infty}F^{-1}_{X_n,\mathbb{Q}}\left(\alpha \right)$. Then we get \begin{align*}
		R(X)&=-\inf_{\mathbb{Q}\in\mathcal{Q}^\prime}\lim\limits_{n\rightarrow\infty}F^{-1}_{X_n,\mathbb{Q}}\left(\alpha \right)\\
		&\leq-\lim\limits_{n\rightarrow\infty}\inf\limits_{\mathbb{Q}\in\mathcal{Q}^\prime}F^{-1}_{X_n,\mathbb{Q}}\left(\alpha \right) \leq\liminf\limits_{n\rightarrow\infty}R(X_n).
		\end{align*} When $\mathcal{Q}^\prime=\{\mathbb{Q}\}$ we have \[R(X)=- \lim\limits_{n\rightarrow\infty}F_{X_n,\mathbb{Q}}^{-1} \left( \alpha\right)=\lim\limits_{n\rightarrow\infty}R(X_n).\]
		\item From the properties of $R$, we have that it can be recovered by $\mathcal{A}_{R}=\{X\in L^{\infty} : R(X)\leq 0\}$, see Proposition 4.6 of \cite{Follmer2016}, for instance. We then have that
	\begin{align*}
	\mathcal{A}_{R}	&= \left\lbrace X \in L^\infty:\min\left\lbrace\arg\min\limits_{x\in\mathbb{R}} E_{\mathbb{Q}}[(X-x)^{+}G+(X-x)^{-}L]\right\rbrace \geq 0,  \forall \;  \mathbb{Q}\in\mathcal{Q}^\prime    \right\rbrace 
	\\ 
	&= \bigcap_{\mathbb{Q}\in\mathcal{Q}^\prime}\left\lbrace X \in L^\infty: E_\mathbb{Q}[(G+L)1_{\{X\leq x\}}]\geq E_\mathbb{Q}[G]\geq E_\mathbb{Q}[(G+L)1_{\{X< x\}}] \Rightarrow x \geq 0\right\rbrace \\ 
	&=\left\lbrace X \in L^\infty:  X\in C_\mathbb{Q}(x)\: \forall \; x<0\:\forall\:\mathbb{Q}\in\mathcal{Q}^\prime \right\rbrace=\bigcap_{\mathbb{Q}\in\mathcal{Q}^\prime}\bigcap_{x<0}C_\mathbb{Q}(x)
	\end{align*}
	\end{enumerate}
\end{proof}

Monotonicity requires that if one position generates worse results than another, then its risk will be greater. Translation Invariance ensures that if one adds a certain gain to a position, then its risk will decrease by the same amount. Positive Homogeneity indicates that risk proportionally increases with position size. In this case, $\mathcal{A}_{R}$ is a closed  cone that contains $L_+^{\infty}$ and has no intersection with  $\{X\in L^{\infty} :X < 0\}$. In the context of Proposition \ref{prp:rob}, we have that
\[
\mathcal{A}_{R}=\left\lbrace X\in L^{\infty} : \mathbb{Q}(X<0)\leq\frac{E_\mathbb{Q}[G]}{E_\mathbb{Q}[G+L]},\:\forall\:\mathbb{Q}\in\mathcal{Q}^\prime\right\rbrace.\]
This set is surplus invariant in the sense of \cite{Koch-Medina2017}, i.e., $X\in\mathcal{A}_{R}$ and $X^-\geq Y^-$ implies in $Y\in\mathcal{A}_{R}$. To see this, note that \[\mathbb{Q}(Y\leq0)=\mathbb{Q}(Y^-\geq0)\leq\mathbb{Q}(X^-\geq0)=\mathbb{Q}(X\leq0)\leq\frac{E_\mathbb{Q}[G]}{E_\mathbb{Q}[G+L]},\:\forall\:\mathbb{Q}\in\mathcal{Q}^\prime.\] Thus $Y\in\mathcal{A}_{R}$. 

\begin{Rmk}
	Regarding $\mathbb{P}-a.s.$ lower semi-continuity, it could be obtained if the argmin sets were singleton for any $X\in L^\infty$. For this convergence of argmin sets see Theorem 7.33 in \cite{Rockafellar2009}, for instance. We also have that the maximum of argmin sets is monotone in $X$. The reasoning is similar to that in (i) but considering $g\colon L^\infty\times\mathbb{R}\rightarrow\mathbb{R}$ as $g(X,x)=E_\mathbb{Q}[G1_{\{X\geq x\}}-L1_{\{X< x\}}]$ and $\sup\{x\in\mathbb{R}\colon g(X,x)\geq0\}$.
\end{Rmk}

	\begin{Prp}\label{prp:sol}
		Let $RD:L^{\infty} \rightarrow\mathbb{R}_+$ be as in \eqref{aversedev}. Then it has the following properties:
\begin{enumerate}
		\item Translation Insensitivity: $RD(X+C)=RD(X),\forall \; X\in L^{\infty}, \forall \; C \in \mathbb{R}$.
	\item Positive Homogeneity: $RD(\lambda X)=\lambda RD(X),\:\forall \; X\in L^{\infty},\forall \; \lambda \geq 0$.
	\item Non-negativity: For all $X\in L^{\infty}$, $RD(X)=0$ for constant $X$, and $RD(X)>0$ for non-constant $X$.
	\item Convexity: $RD(\lambda X+(1-\lambda)Y)\leq \lambda RD(X)+(1-\lambda)RD(Y),\:\forall \; X,Y\in L^{\infty},\:\forall\:\lambda\in[0,1]$.

	\item Lipschitz continuity.
	\item Lower semi-continuity regarding $\mathbb{P}-a.s.$ sense for bounded sequences. If in addition $\mathcal{Q}^\prime$ is a singleton, then it is continuous in $\mathbb{P}-a.s.$ sense for bounded sequences.
		\item Range Dominance: \begin{itemize}
		\item if $G\leq 1$, then $RD(X)\leq\sup\limits_{\mathbb{Q}\in\mathcal{Q}^\prime}E_\mathbb{Q}[X]-\operatorname{ess}\inf X,\forall \; X\in L^{\infty}$. 
		\item if $L\leq 1$, then $RD(X)\leq\operatorname{ess}\sup X-\sup\limits_{\mathbb{Q}\in\mathcal{Q}^\prime}E_\mathbb{Q}[X],\forall \; X\in L^{\infty}$. 
		\item  if $G,L\leq 1$, then $RD(X)\leq\dfrac{\operatorname{ess}\sup X-\operatorname{ess}\inf X}{2},\forall \; X\in L^{\infty}$.
	\end{itemize}
	\item Dual representation: $RD(X)=\sup\limits_{Z\in\mathcal {Z}_{\mathcal{Q}^\prime}}E[XZ]$, 
	where $\mathcal {Z}_{\mathcal{Q}^\prime}$ is the closed convex hull of \[\cup_{\mathbb{Q}\in\mathcal{Q}^\prime}\left\lbrace Z\in L^1:E[Z]=0,\:E[XZ]\leq  E_{\mathbb{Q}}[(X-x)^{+}G+(X-x)^{-}L]\:\forall\:x\in\mathbb{R},\:\forall\:X\in L^\infty\right\rbrace.\] 
	\end{enumerate}
\end{Prp}
	
\begin{proof}
\begin{enumerate}
	
	\item Directly obtained by a change of the minimization variable.
	
	\item Analogous to (i).

	\item Since $R(C)=-C,\:\forall\:C\in\mathbb{R}$, we have that $RD(C)=0$. If $X$ is not a constant, with abuse of notation, we have that $\exists\:\omega\in\Omega$ with $X(\omega)\neq -R(X)$. We then get that at least one between $(X(\omega)+R(X))^{+}$ or $(X(\omega)+R(X))^{-}$ is strictly positive. Hence, $RD(X)>0$.

		\item Consider any pair $X,Y\in L^\infty$ and any $x\in\mathbb{R}$. We then obtain for each $\mathbb{Q}\in\mathcal{Q}^\prime$ that
		\begin{align*}
		&\min\limits_{x_1+x_2\in\mathbb{R}}E_{\mathbb{Q}}\left[\left(X+Y-(x_1+x_2)\right)^{+}G+\left(X+Y-(x_1+x_2)\right)^{-}L\right]\\
		\leq&\min\limits_{x_1,x_2\in\mathbb{R}}\left\lbrace E_{\mathbb{Q}}\left[\left(X-x_1\right)^{+}G+\left(X-x_1\right)^{-}L+\left(Y-x_2\right)^{+}G+\left(Y-x_2\right)^{-}L\right] \right\rbrace\\
		=&\min\limits_{x_1\in\mathbb{R}} E_{\mathbb{Q}}\left[\left(X-x_1\right)^{+}G+\left(X-x_1\right)^{-}L\right] +		\min\limits_{x_2\in\mathbb{R}} E_{\mathbb{Q}}\left[\left(Y-x_2\right)^{+}G+\left(X-x_2\right)^{-}L\right].
		\end{align*}
	By taking the supremum over $\mathbb{Q}\in\mathcal{Q}^\prime$ on both sides of the resulting inequality and from linearity of the real line one gets $RD(X+Y)\leq RD(X)+RD(Y),\forall \; X,Y\in L^{\infty}$. Together with Positive Homogeneity (item (ii)) it implies that $RD$ is a convex functional. 
		
	\item  For Lipschitz continuity, note that $RD$ is bounded since for any $X\in L^\infty$ we have \[\left|RD(X)\right|\leq \sup_{\mathbb{Q}\in\mathcal{Q}^\prime}E_\mathbb{Q}\left[X^+G+X^-L\right]\leq \operatorname{ess}\sup_\mathbb{P}(\max(G,L))\operatorname{ess}\sup_\mathbb{P}|X|<\infty.\]
Furthermore, from both Convexity and Positive Homogeneity, we have that $RD$ is sub-linear, i.e.,  $RD(X+Y)\leq RD(X)+RD(Y)),\:\forall\:X,Y\in L^\infty $. Thus, \[RD(X)-RD(Y)\leq RD(X-Y)\leq\operatorname{ess}\sup_\mathbb{P}(\max(G,L))\operatorname{ess}\sup_\mathbb{P}|X-Y|.\] By reversing the roles of $X$ and $Y$ we get  \[|RD(X)-RD(Y)|\leq\operatorname{ess}\sup_\mathbb{P}(\max(G,L))\operatorname{ess}\sup_\mathbb{P}|X-Y|.\] Since $G,L\in L^\infty_{++}$ are fixed, we get the claim.

\item  For any $x\in\mathbb{R}$ and any $\mathbb{Q}\in\mathcal{Q}^\prime$ by Dominated convergence we have that $\lim\limits_{n\rightarrow\infty}X_n=X,\:\{X_n\}\subset L^\infty$ bounded implies, for any bounded sequence $\{x_n\}\subset\mathbb{R}$ such that $\lim\limits_{n\rightarrow\infty}x_n=x$, in \[E_{\mathbb{Q}}[(X-x)^{+}G+(X-x)^{-}L]=\lim\limits_{n\rightarrow\infty}E_{\mathbb{Q}}[(X_n-x_n)^{+}G+(X_n-x_n)^{-}L].\] Let $\{x^{*}_{\mathbb{Q},n}\}$ be a sequence where each member is from the argmin set for $X_n$ under $\mathbb{Q}$. Since $\{X_n\}$ is bounded, also is bounded $\{x^{*}_{\mathbb{Q},n}\}$ for any $\mathbb{Q}\in\mathcal{Q}^\prime$. By the Bolzano-Weierstrass Theorem, we have, by taking a subsequence if needed, that $x^{*}_\mathbb{Q}=\lim\limits_{n\rightarrow\infty}x^{*}_{\mathbb{Q},n}$ is well defined and finite. We then get that \begin{align*}
RD(X)&\leq \sup\limits_{\mathbb{Q}\in\mathcal{Q}^\prime}E_{\mathbb{Q}}[(X-x^{*}_\mathbb{Q})^{+}G+(X-x^{*}_\mathbb{Q})^{-}L]\\
&=\sup\limits_{\mathbb{Q}\in\mathcal{Q}^\prime}\lim\limits_{n\rightarrow\infty}E_{\mathbb{Q}}[(X_n-x^{*}_{\mathbb{Q},n})^{+}G+(X_n-x^{*}_{\mathbb{Q},n})^{-}L]\\
&\leq\liminf\limits_{n\rightarrow\infty}\sup\limits_{\mathbb{Q}\in\mathcal{Q}^\prime}E_{\mathbb{Q}}[(X_n-x^{*}_{\mathbb{Q},n})^{+}G+(X_n-x^{*}_{\mathbb{Q},n})^{-}L]\\
&=\liminf\limits_{n\rightarrow\infty}RD(X_n).
\end{align*}
When $\mathcal{Q}^\prime=\{\mathbb{Q}\}$ we have in addition that
\begin{align*}
RD(X)&=\min\limits_{x\in\mathbb{R}}\lim\limits_{n\rightarrow\infty}E_{\mathbb{Q}}[(X_n-x)^{+}G+(X_n-x)^{-}L]\\
&\geq\limsup\limits_{n\rightarrow\infty}\min\limits_{x\in\mathbb{R}}E_{\mathbb{Q}}[(X_n-x)^{+}G+(X_n-x)^{-}L]\\
&=\limsup\limits_{n\rightarrow\infty}RD(X_n).
\end{align*}
Hence $RD(X)=\lim\limits_{n\rightarrow\infty}RD(X_n)$.

\item For $G\leq1$ we have for any $X\in L^\infty$ that \[RD(X)\leq\sup\limits_{\mathbb{Q}\in\mathcal{Q}^\prime}E_\mathbb{Q}\left[\left(X-\operatorname{ess}\inf_\mathbb{P}X\right)G\right]\leq\sup\limits_{\mathbb{Q}\in\mathcal{Q}^\prime}E_\mathbb{Q}[X]-\operatorname{ess}\inf_\mathbb{P} X.\] For $L\leq 1$ the reasoning is analogous with $(\operatorname{ess}\sup_\mathbb{Q}X-X)L$ rather than $(X-\operatorname{ess}\inf_\mathbb{Q}X)G$. When both $G,L\leq 1$ the upper bound is a direct consequence.

\item Let $RD_\mathbb{Q}(X)=\min\limits_{x\in\mathbb{R}} E_{\mathbb{Q}}[(X-x)^{+}G+(X-x)^{-}L],\:\forall\:\mathbb{Q}\in\mathcal{Q}^\prime$. From the properties of $RD$, Theorem 1 of  \cite{rockafellar2006generalized} and The Main Theorem of \cite{pflug2006subdifferential}, for instance, we have the following dual representation:
\begin{equation*}
RD(X)=\sup\limits_{Z\in\mathcal {Z}_{\mathcal{Q}^\prime}}E[XZ],
\end{equation*}
where \begin{align*}
\mathcal{Z}_{\mathcal{Q}^\prime}&=\left\lbrace Z\in L^1:E[Z]=0,E[XZ]\leq RD(X),\:\forall\:X\in L^\infty\right\rbrace\\
&=\left\lbrace Z\in L^1:E[Z]=0,\exists\:\:\mathbb{Q}\in\mathcal{Q}^\prime\:\text{s.t.}\:E[XZ]\leq RD_{\mathbb{Q}}(X),\:\forall\:X\in L^\infty\right\rbrace\\
&=\cup_{\mathbb{Q}\in\mathcal{Q}^\prime}\left\lbrace Z\in L^1:E[Z]=0,\:E[XZ]\leq RD_{\mathbb{Q}}(X),\:\forall\:X\in L^\infty\right\rbrace. 
\end{align*} The fact that supremum is not altered by considering the closed convex hull is due to convexity and lower semi-continuity of $RD$ and can be verified in  Proposition 4.14 of \cite{Righi2018b}. 
\end{enumerate}
 
\end{proof}

Translation Insensitivity indicates the deviation in relation to the expected value does not change if we add a constant value. Convexity, which is related to the principle of diversification, implies that the risk of a combined position is less than the combination of individual risks. Non-negativity assures that there is dispersion only for non-constant positions. Range Dominance implies sharp bounds for the functional. Such properties make $RD$ a generalized deviation measure in the sense of  \cite{rockafellar2006generalized} and \cite{Pflug2007}. Under the special conditions of Proposition \ref{prp:rob}, we have that $\mathcal{Z}_{\mathcal{Q}^\prime}$ is the closed convex hull of $\cup_{\mathbb{Q}\in\mathcal{Q}^\prime}\mathcal{Z}_{\mathbb{Q}}$, where \[\mathcal {Z}_\mathbb{Q}=\left\lbrace Z\in L^1:Z=\left(1-\frac{d\mathbb{Q}^{'}}{d\mathbb{Q}}\right)E[G],\mathbb{Q}^{'}\ll\mathbb{Q},\frac{d\mathbb{Q}^{'}}{d\mathbb{Q}}\leq\left( \frac{E_\mathbb{Q}[G]}{E_\mathbb{Q}[G+L]}\right) ^{-1}\right\rbrace.\]

Other very relevant properties in risk/deviation measures literature are, for $f\colon L^\infty\rightarrow\mathbb{R}$, Law Invariance:  if $F_X=F_Y$, then $f(X)=f(Y)$; and Co-monotonic Additivity: $f(X+Y)=f(X)+f(Y)$ for $X,Y$ co-monotone, i.e., $(X(\omega)-X(\omega^\prime))(Y(\omega)-Y(\omega^\prime))\geq0,\:\forall\:\omega,\omega^\prime\in\Omega$. Such properties are directly related to the notion of Spectral and Distortion risk/deviation measures, see \cite{Acerbi2002c} and \cite{Grechuk2009}, for instance. The intuition is that functionals depend solely on the distribution of random variables and there is no diversification benefit in the case of extreme positive association. Concerning to Co-monotonic Additivity, when $\mathcal{Q}^\prime=\{\mathbb{Q}\}$ is a singleton and $X$ is $\mathbb{Q}$-independent of $G$ and $L$, it is satisfied by $R$ and $RD$ because both VaR and ES possess this property.

Regarding Law Invariance, it is not satisfied  in general for our approach since both $R$ and $RD$ depend on the joint distributions of pairs $X,G$ and $X,L$. Matter of fact, we would have to adapt the property for Joint Law Invariance as $F_{X,G}=F_{Y,G}$ and $F_{X,L}=F_{Y,L}$ implying into $f(X)=f(Y)$, where $F_{X,Y}$ is the joint distribution of $X$ and $Y$. An exception is the situation of independence in Proposition \ref{prp:rob}.  Moreover, our approach under a set of probability measures is linked to the $\mathcal{Q}^\prime$-Law Invariance ($\mathcal{Q}^\prime$-based) of \cite{Wang2018} and \cite{Righi2018b} where if $F_{X,\mathbb{Q}}=F_{Y,\mathbb{Q}},\:\forall\:\mathbb{Q}\in\mathcal{Q}^\prime$, then $f(X)=f(Y)$. Nonetheless, even in such case, we would have to adapt for the joint distributions issue. We let this pursue for future research since it is outside the scope of this paper.

We now focus on the effect of costs $G$ and $L$ in our risk and deviation measures. This is important since the minimization procedure is directly dependent on such costs, beyond the financial position $X$. Note that we are in fact considering the maps defined as $G\rightarrow R_{G,L,\mathcal{Q}^\prime}(X)$, $L\rightarrow R_{G,L,\mathcal{Q}^\prime}(X)$, $G\rightarrow RD_{G,L,\mathcal{Q}^\prime}(X)$ and $L\rightarrow RD_{G,L,\mathcal{Q}^\prime}(X)$, for some fixed $X\in L^\infty$. 

\begin{Prp}\label{prp:GL}
Let $R:=R_{G,L,\mathcal{Q}^\prime}:L^{\infty} \rightarrow\mathbb{R}$ be as in \eqref{averserisk}. Then it has the following properties for any $X\in L^\infty$:
\begin{enumerate}
	\item Non-increasing in $G$ and non-decreasing in $L$.
	\item Boundedness in both $G$ and $L$.
	\item Lower semi-continuity in both norm and $\mathbb{P}-a.s.$ sense for both $G$ and $L$  if $F_X$ is continuous and $F_{X,\mathbb{Q}}$ strictly increasing for any $\mathbb{Q}\in\mathcal{Q}^\prime$. If $\mathcal{Q}^\prime$ is a singleton, then it is continuous in both norm and $\mathbb{P}-a.s.$ sense. 
\end{enumerate}
\end{Prp}

\begin{proof}
 Consider the functions $f_\mathbb{Q},g_\mathbb{Q},h_\mathbb{Q}\colon L^\infty\times L^\infty\times\mathbb{R}\rightarrow\mathbb{R}$, respectively defined as $f_\mathbb{Q}(G,L,x)=E_\mathbb{Q}[G1_{\{X\geq x\}}-L1_{\{X\leq x\}}]$, $g_\mathbb{Q}(G,L,x)=E_\mathbb{Q}[G1_{\{X\geq x\}}-L1_{\{X< x\}}]$ and $h_\mathbb{Q}(G,L,x)=E_\mathbb{Q}[G1_{\{X>x\}}-L1_{\{X\leq x\}}]$. We then have:
\begin{enumerate}
	\item Similar to Monotonicity property for $X$ (item (i)) in Proposition \ref{prp:sol2} by noticing that, for any $\mathbb{Q}\in\mathcal{Q}^\prime$, both functions $g_\mathbb{Q}(G,L,x)$ and $h_\mathbb{Q}(G,L,x)$ are non-decreasing in the first argument and non-increasing in the second and third arguments.
 \item For fixed $X\in L^\infty$ we have $R_{G,L,\mathcal{Q}^\prime}(X)\in[\operatorname{ess}\inf X,\operatorname{ess}\sup X]$ for any $G,L\in L^\infty_{++}$. Thus, $|R_{G,L,\mathcal{Q}^\prime}(X)|\leq\operatorname{ess}\sup |X|<\infty$.
\item  We focus on the proof for $G$. Concerning to $L$, the reasoning is analogous. In this case, we have from Proposition \ref{prp:prob} that argmin sets are singleton. We begin with lower semi-continuity in $\mathbb{P}-a.s.$ sense. Let $A_{G,\mathbb{Q}}=\{x\in\mathbb{R}\colon f_\mathbb{Q}(G,L,x)=0\}$, which, from Proposition \ref{prp:prob}, is a singleton since $F_X$ is continuous and $F_{X,\mathbb{Q}}$ is strictly increasing by hypothesis. Moreover, we have that $\{G_n\}\subset L^\infty_{++}$ and $\lim\limits_{n\rightarrow\infty}G_n=G$, which implies in $\lim\limits_{n\rightarrow\infty}A_{G_n,\mathbb{Q}}=A_{G,\mathbb{Q}}$. Again, such convergence of $\arg\min$ sets is guaranteed by Theorem 7.33 in \cite{Rockafellar2009}, for instance. We then obtain that
\begin{align*}
R_{G,L,\mathcal{Q}^\prime}(X)&=\sup_{\mathbb{Q}\in\mathcal{Q}^\prime}\left\lbrace -\min \left\lbrace \lim\limits_{n\rightarrow\infty}A_{G_n,\mathbb{Q}}\right\rbrace\right\rbrace\\
&\leq\liminf\limits_{n\rightarrow\infty}\sup_{\mathbb{Q}\in\mathcal{Q}^\prime}\left\lbrace -\min A_{G_n,\mathbb{Q}}\right\rbrace\\
&=\liminf\limits_{n\rightarrow\infty}R_{G_n,L,\mathcal{Q}^\prime}(X).
\end{align*} 
When $\mathcal{Q}^\prime=\{\mathbb{Q}\}$ we have that \[R_{G,L,\mathcal{Q}^\prime}(X)=-\min \left\lbrace \lim\limits_{n\rightarrow\infty}A_{G_n,\mathbb{Q}}\right\rbrace=- \lim\limits_{n\rightarrow\infty}A_{G_n,\mathbb{Q}}=\lim\limits_{n\rightarrow\infty}R_{G_n,L,\mathcal{Q}^\prime}(X).\]
Since convergence in $L^\infty$ norm implies convergence in $\mathbb{P}-a.s.$ sense, we have the claim.
\end{enumerate}
\end{proof}

The monotone behavior for $G$ and $L$ is a consequence of the fact that more or less weight is put to overestimation or underestimation, respectively. When argmin sets are singleton, continuity properties imply that $R(X)\uparrow -\operatorname{ess}\inf X$ when $G\downarrow0$  representing the worst possible loss for $X$. On the other side, we have that it assumes the value $R(X)\downarrow- \operatorname{ess}\sup X$  when $L\downarrow0$ representing the best possible loss for $X$, corroborating to the identified monotone behavior. 

\begin{Prp}\label{prp:GL2}
	Let $RD:=RD_{G,L,\mathcal{Q}^\prime}:L^{\infty} \rightarrow\mathbb{R}$ be as in \eqref{aversedev}. Then it has the following properties for any $X\in L^\infty$:
	\begin{enumerate}
		\item Non-decreasing in both $G$ and $L$.
		\item Concavity for both $G$ and $L$ if $\mathcal{Q}^\prime$ is a singleton. Furthermore, we have  \begin{align*}
		RD_{(\lambda G_1+(1-\lambda )G_2),L,\mathcal{Q}^\prime}(X)\geq\max\limits_{\mu\in[0,1]}\left\lbrace \mu\lambda RD_{G_1,L,\mathcal{Q}^\prime}(X)+(1-\mu)(1-\lambda)RD_{G_2,L,\mathcal{Q}^\prime}(X)\right\rbrace,\\
		RD_{G,(\lambda L_1+(1-\lambda )L_2),\mathcal{Q}^\prime}(X)\geq\max\limits_{\mu\in[0,1]}\left\lbrace \mu\lambda RD_{G,L_1,\mathcal{Q}^\prime}(X)+(1-\mu)(1-\lambda)RD_{G,L_2,\mathcal{Q}^\prime}(X)\right\rbrace,
		\end{align*}
		$\forall\:\lambda\in[0,1]$, if $\mathcal{Q}^\prime$ is a convex set. 
		\item Norm and $\mathbb{P}-a.s.$ lower semi-continuity for bounded sequences in both $G$ and $L$. If in addition $\mathcal{Q}^\prime$ is a singleton, then it is continuous in norm and $\mathbb{P}-a.s.$ sense for bounded sequences.
	\end{enumerate}
\end{Prp}

\begin{proof}
 We always focus on the case $G$. The reasoning for $L$ is quite similar. We then have:
	\begin{enumerate}
		\item If $G_1\geq G_2$, $G_1,G_2\in L^\infty_{++}$, 
		then \[E_\mathbb{Q}[(X-x)^+G_1+(X-x)^-L]\geq E_\mathbb{Q}[(X-x)^+G_2+(X-x)^-L],\:\forall\:x\in\mathbb{R},\:\forall\:\mathbb{Q}\in\mathcal{Q}^\prime.\] Since both minimum and supremum are monotone functions, we have non-decreasing behavior. 
		\item Let $G=\lambda G_1 +(1-\lambda) G_2$ for $G_1,G_2\in L^\infty_{++}$, $\lambda\in[0,1]$ and $\mathcal{Q}^\prime=\{\mathbb{Q}\}$. We then have that
	\begin{align*}
&RD_{G,L,\mathcal{Q}^\prime}(X)\\
\geq&\lambda\min\limits_{x\in\mathbb{R}}E_\mathbb{Q}\left[(X-x)^+G_1+(X-x)^-L\right]+(1-\lambda)\min\limits_{x\in\mathbb{R}}E_\mathbb{Q}\left[(X-x)^+G_2+(X-x)^-L\right]\\
=&\lambda RD_{G_1,L,\mathcal{Q}^\prime}(X)+(1-\lambda)RD_{G_2,L,\mathcal{Q}^\prime}(X). 
	\end{align*} 
Let $\mathcal{Q}^\prime$ be convex. We have that $f\colon L^\infty\times\mathcal{Q}^\prime\rightarrow\mathbb{R}$ defined as \[f(G,\mathbb{Q})=\min\limits_{x\in\mathbb{R}}E_\mathbb{Q}\left[(X-x)^+G+(X-x)^-L\right]\] is a bi-concave (concave in both arguments) function. We thus get for $G=\lambda G_1+(1-\lambda)G_2$ with $G_1,G_2\in L^\infty_{++}$ and $\lambda\in[0,1]$ that
\begin{align*}
RD_{G,L,\mathcal{Q}^\prime}(X)&\geq \lambda\mu f(G_1,\mathbb{Q}_1)+\lambda(1-\mu)f(G_1,\mathbb{Q}_2)+(1-\lambda)\mu f(G_2,\mathbb{Q}_1)\\
&+(1-\lambda)(1-\mu)f(G_2,\mathbb{Q}_2),\:\forall\:\mathbb{Q}_1,\mathbb{Q}_2\in\mathcal{Q}^\prime,\:\forall\:\mu\in[0,1]\\
&\geq\lambda\mu f(G_1,\mathbb{Q}_1)+(1-\lambda)(1-\mu)f(G_2,\mathbb{Q}_2),\:\forall\:\mathbb{Q}_1,\mathbb{Q}_2\in\mathcal{Q}^\prime,\:\forall\:\mu\in[0,1].
\end{align*}
Thus by taking the supremum over pairs $\mathbb{Q}_1,\mathbb{Q}_2\in\mathcal{Q}^\prime$ we get \[RD_{G,L,\mathcal{Q}^\prime}(X)\geq\lambda\mu[RD_{G_1,L,\mathcal{Q}^\prime}(X)+(1-\lambda)(1-\mu)RD_{G_2,L,\mathcal{Q}^\prime}(X),\:\forall\:\mu\in[0,1].\] By taking the supremum over $\mu\in[0,1]$, which is in fact attained, we get the claim.
\item  
Lower semi-continuity regarding $\mathbb{P}-a.s.$ sense follows similar steps to item (vi) in Proposition \ref{prp:sol} by considering $\{G_n\}$ instead of $\{X_n\}$. Since convergence in $L^\infty$ norm implies convergence in $\mathbb{P}-a.s.$ sense, we have the claim.
	\end{enumerate}
\end{proof}

For $RD$ we have that larger values are observed when both costs rise since both $G$ and $L$ are punitive. The concave behavior intuitively means that marginal costs are non-increasing, which is well sounded in practical matters. Regarding continuity, we have when both extremes cases $G\downarrow0$ and $L\downarrow0$  occurs that $RD(X)\downarrow0$, in consonance with the pattern for $R$.

\section{Empirical illustration}\label{sec:illust}

In this section, we expose an empirical illustration of our proposed approach for capital determination problem considering real financial data.
We consider $\mathcal{Q}^\prime = \mathcal{Q}_{\rho}$, where $\rho$ refers to EL, MSD, ES, EVaR and ML.
These measures are described in Example \ref{ex:risk}.
For MSD, we chose $\beta =1$ to incorporate all the deviation term\footnote{Similar value of $\beta$ is used in \cite{righi2017simulation} to estimate loss-deviation risk measures. Their results showed a better performance of risk measures penalized by deviation in comparison with their counterpart without deviation.}. 
The values of $\alpha$ are  $0.025$ in the case of ES and $0.00145$ for EVaR.  The latest revisions of the Committee on Banking Supervision recommend this  $\alpha$  for ES (see \cite{basel2013fundamental}) and 
for EVaR this value  is closely comparable to $ES^{0.025}$ for a normally distributed $X$ (see \cite{Bellini2017}).

For the financial position $X$, we consider log-returns of  S\&P 500 U.S. market index  multiplied by 100, which is a typical example of  a financial asset used in academic research. 
Regarding the costs of risk overestimation ($G$) and underestimation ($L$), we consider yield rates of the U.S. Treasury Bill with maturity of three months and the U.S. Dollar based Overnight London Interbank Offered Rate (LIBOR)\footnote{We emphasize that here our intention is only to illustrate the applicability of our approach. In this way, different costs from risk overestimation ($G$) and underestimation ($L$) can be considered. In 
periods of greater instability, more aggressive rates can be used as a cost from underestimation, for example.}. These rates represent, respectively, a risk-free investment with high liquidity where the surplus over capital requirement could be safely applied, and a rate for providential loans when capital requirement is not enough. We consider daily data from January 2001 to May 2018, being $N = 4376$ observations. We convert both yield rates to daily frequency.

In this estimation process of risk measures, we have log-returns defined in some discrete probability space $\Omega=\left(\omega_1,\cdots,\omega_n \right)$, as $X(\omega_i)=X_i,\:i=1,\cdots, n$, where $n$ is the number of observations. For this approach, we consider  $\mathbb{P}(X=X_i)=\mathbb{P}(\omega_i)=\frac{1}{n}$. This leads to the empirical distribution and expectation, respectively, defined as:
\begin{equation*}
F_X(x)=\frac{1}{n}\sum_{i=1}^{n}1_{\{X_i\leq x\}}, \:E[X]=\frac{1}{n}\sum_{i=1}^{n}X_i.
\end{equation*} This empirical method of estimation, known as historical simulation (HS), is a non-parametric approach that makes virtually no  assumptions about the data. Note that $\mathcal{Q}_\rho$ is considered for each choice of $\rho$ as exposed in Example 1. In this case $d\mathbb{Q}/d\mathbb{P}\in\mathbb{R}^n$ with $\frac{d\mathbb{Q}}{d\mathbb{P}}(\omega_i)=n\mathbb{Q}(\omega_i),\:i=1,\cdots,n$. Note that in this case any probability $\mathbb{Q}$ is absolutely continuous in relation to $\mathbb{P}$. For instance, we have $\mathcal{Q}_{ES^\alpha}=\left\lbrace x=(x_1,\cdots,x_n)\in[0,1]^n\colon\sum_{i=1}^nx^i=1,\:x_i\leq\frac{1}{n\alpha}\:\forall\:i\in\{1,\cdots,n\}\right\rbrace $.  Under this specification we adapt \eqref{averserisk} and compute $R_\rho$ as: \begin{equation}
R_{\rho}(X)=\max\limits_{\mathbb{Q}\in\mathcal{Q}_\rho}\left\lbrace -\min\left\lbrace \arg\min\limits_{x\in\mathbb{R}}\sum_{i=1}^{N}\{ [(X_i-x)^{+}G_i+(X_i-x)^{-}L_i]\mathbb{Q}(\omega_i)\}\right\rbrace\right\rbrace \label{eq:discrete1}. 
\end{equation}

In this case, the supremum over $\mathcal{Q}_\rho$ is attained since it is compact because $\Omega$ has finite dimension. We compare the results of our risk measures with those obtained by risk measures using the following loss functions  $-xG+(X-x)^{-}$ and $-xG+(X-x)^{-}L$, which we name $R_{\rho}^{b}$ and $R_{\rho}^{c}$, respectively.  
These measures are in our illustration because they aim to identify the optimal amount of capital minimizing the costs that link to risk underestimation and overestimation. 
Although $R_{\rho}^{b}$ do not explicitly consider $L$, it is a particular case of $-xG+(X-x)^{-}L$  when $L = 1$. 
We compute $R_{\rho}^{b}$ and $R_{\rho}^{c}$ by procedure exposed on  formulation \eqref{eq:discrete1}, changing the loss function. 
We also compare our risk measures with those usually considered for capital requirements, which are  $VaR^{0.01}$ and $ES^{0.025}$, and with ML\footnote{The reader should not confuse our risk measures computed under the dual set of ES and ML, $\mathcal{Q}_{ES^{\alpha}}$ and $\mathcal{Q}_{ML}$, which generate $R_{ES}(X)$ and $R_{ML}(X)$, and the Expected Shortfall and Maximum Loss of $X$, $ES^\alpha(X)$ and $ML(X)$, respectively.}. 
Although VaR is not a coherent risk measure, it is the most common risk measure currently used in the literature and industry. 

In Figure \ref{fig:plot_datarm}, we exhibit the graphic illustration of  series described and the probability for quantile of $X$ represented  by  $\frac{G}{G+L}$  for $R_{\rho}$, $G$  for $R_{\rho}^{b}$, and $\frac{G}{L}$ for $R_{\rho}^{c}$, where $\rho=EL$. Such quantities would be respective solutions in the fashion of Proposition \ref{prp:rob} for such loss function and can be considered for benchmarking. We name them, respectively, as  Quantile$_1$, Quantile$_2$, and  Quantile$_3$. 
One can note this sample contains both turbulent and calm periods, as visualized by the volatility clusters on log-returns, peaks and bottoms on the price series. Regarding the yield rates, there is a huge change in their dynamics at the end of 2008, possibly due to economic changes generated by the sub-prime crisis. To isolate the two distinct patterns identified for the costs, we divided the sample into two periods (2001--2008 and 2009--2018). So, we expose the results for the whole sample and for the two sub-samples. We consider a rolling estimation window of 250 observations (around one year of business days) to compute risk measures, excluding the year of 2001 for this out-sample forecasting exercise. In this sense, for each day in the out-sample period, we use the last 250 observations to compute the empirical distribution and risk measures.

The change in the dynamics of costs is also observed when analyzing the graphical illustration of probability distribution. 
 Regarding Quantile$_1$,  in the first sub-sample, there is a stable evolution around 46\%, close to the middle of the distribution function. In the second sub-sample, there is strong variation with values representing smaller probabilities linked to more extreme losses, which affects the value of our risk measures, but not necessarily VaR and ES. Hence, the sample split we make is an interesting feature of our analysis. 
 For the Quantile$_2$ the evolution coincides with the daily rates of $G$, which has values less than $1\%$, i.e., quantile used in the computation of VaR.
Thus, for these measures, we expect more extreme losses compared to the values computed by VaR and ES. 

In relation to probability distribution of  Quantile$_3$, we note some particularities.
In the first sub-sample the variation of probability it fluctuates around 89$\%$, while in the second sub-sample is around 57$\%$. However, in both sub-samples  there are values greater than 100$\%$, which contradicts the expected values for a probability. 
We identify these values because our structure does not impose  $G \leq L$. For the purpose of insurance  $R_{\rho}^c$, as originally proposed, becomes an interesting alternative since the values of its probability distribution are closer to the upper tail compared to the lower tail. However, our interest is a risk measurement procedure that minimizing the mentioned costs for the determination of capital.

We expose in Figure \ref{fig:plot_res} a plot with the time series of all estimated risk measures in relation to the negative of log-returns. We make this sign conversion because risk measures have their values in terms of losses. Results indicate risk measures estimates follow the evolution of losses on the financial position, as expected. Periods of higher volatility and losses exhibit larger values for risk measures. Complementing, in Table \ref{tab:res}, we expose some descriptive statistics of these series, and the cost criteria (sum of daily costs in the sample) linked to the risk measures, which we obtain in the following manner:
\begin{align}
&\text{Cost} = \sum_{t=1}^{T}[(X_t - x_t)^{+}G_t+(X_t-x_t)^{-}L_t], \nonumber\\
&\text{Cost}^b= \sum_{t=1}^{T} [-x_tG_t+(X_t-x_t)^{-}], \nonumber\\
&\text{Cost}^c = \sum_{t=1}^{T}[-x_tG_t+(X_t-x_t)^{-}L_t],\label{fmeas}
\end{align}
 where $x_t$ is the predicted value in period $t$ with corrected negative sign, of determined risk measure, and $T$ is the out-of-sample number of observations. $\text{Cost}$, $\text{Cost}^b$ and $\text{Cost}^c$ refer to metrics for model selection obtained from the loss functions used to compute, respectively, $R_{\rho}$, $R_{\rho}^{b}$ and $R_{\rho}^{c}$. 

As observed in Table \ref{tab:res},  $R_\rho$ based on more aggressive risk measures, i.e., those that assume larges values such as ES and ML, tend to display higher values.
Our risk measures are well below VaR and ES, exhibit a smoother behavior that is indicated by standard deviations and ranges, beyond much smaller values for cost criterion. We can explain this because our approach considers, beyond past observations of the financial position, costs from underestimation and overestimation, while VaR and ES only deal with the historic position. With respect to $R_{\rho}^b$, their mean values do not change when using different risk measures and coincide with Maximum Loss, where we quantify risk by the value of worst case. For example, in whole sample, mean value of  $ML =  R^b_{EL} = R^b_{MSD} =  R^b_{EVaR} = R^b_{ML} = 3.80$.

It is also verified that the cost criterion $R_\rho^b$, computed through the three metrics, generally coincide with the values estimated for  $ML$. 
The pattern remains in the sub-samples. We can justify this by extreme values observed in their probability distribution, as descriptive statistics and graphical illustration of Quantile$_2$. 
We also observed that cost criterion, computed  by Cost$^b$ and Cost$^c$,  can take on negative values, i.e., do not fulfill Non-negativity axiom,  and their value, even for the series of returns, is not zero.
We justify this because $R_{\rho}^{b}$ and $R_{\rho}^{c}$ 
consider $E[-xG]$ such as the regret cost. 
For our risk measures, this does not happen because we penalize only the regret costs pertaining to amount where the realized outcome is higher than capital requirement, rather than its total value.

 Regarding our sub-samples, we maintain the results, but some discrepancies arise. We observe in the first sub-sample that log-returns have a positive tendency with high volatility. However, risk measures exhibit smaller and less volatile values, in relation to the  whole sample. Aggregated costs are higher compared to second sub-sample. This pattern links to the yield rates, with larger values for this period and stable evolution. In the second sub-sample, we inverse all the patterns. Again, the yield rates seem to be the major determining factor. 
 
Concerning the performance of the measures, in relation to Cost and  Cost$^c$, with the exception of EL and MSD computed under dual set, our risk measures have the best results. 
As for Cost$^b$, we obtain the lower cost criteria by $R_{\rho}^b$, as expected, and we achieve the highest cost criteria, in general, by  $R_{\rho}^c$. 
In summary, these findings corroborate the results obtained when using the whole sample, evidencing that our risk measures lead to more parsimonious and less costly capital determinations concerning risk measures recommended by regulatory agencies and measures proposed for a purpose similar to ours. This is explained because they consider not only the losses that are taking into account but also the gain opportunities.

	\section{Conclusion}\label{conclusion}

	In this paper, we propose a risk measurement approach that optimally balances costs from capital determination overestimation and underestimation. The objective is to obtain a real value that minimizes the expected value of the sum between both costs. We adopt a robust framework, where we consider the supremum for such minimizer and minimum based on a set of probability measures. We relate our approach with the dual representation of coherent risk measures since we based it on a supremum of expectations over probabilities. We perform an adaptation of the dual representation of usual discrete probability spaces. We develop some theoretical results that guarantee the solution for this problem, develop properties that our risk measure fulfills, and characterize the resulting minimum cost as a deviation measure.

 In an empirical example, we estimate the capital determination using our approach, compared to risk measures that also focus on minimizing costs mentioned and to usual capital requirement determinations implied by Basel and Solvency accords, VaR$^{0.01}$ and ES$^{0.025}$. Results indicate our approach leads to less costly and more parsimonious charges. Our risk measures reflect the temporal evolution of the data, indicating their practical utility. 


Our results are valid to risk management in other fields, such as reliability, ambient environment, and health, for instance. In these areas, as many others where risk management is a growing concern and research field, to have a measurement procedure that balances costs from excessive protection and lack of safeguards is necessary and desired. It is valid to think of extrapolating such a robust optimization problem over a set of probabilities to other contexts, such as portfolio strategies and outside finance. We also suggest conducting generalizations of our proposed approach to dynamic and multivariate frameworks.

	\bibliographystyle{elsarticle-harv}
	\bibliography{refTese}
	
	\singlespace
	\newpage

	\begin{figure}[H]
		\centering
		\includegraphics[width=1\linewidth]{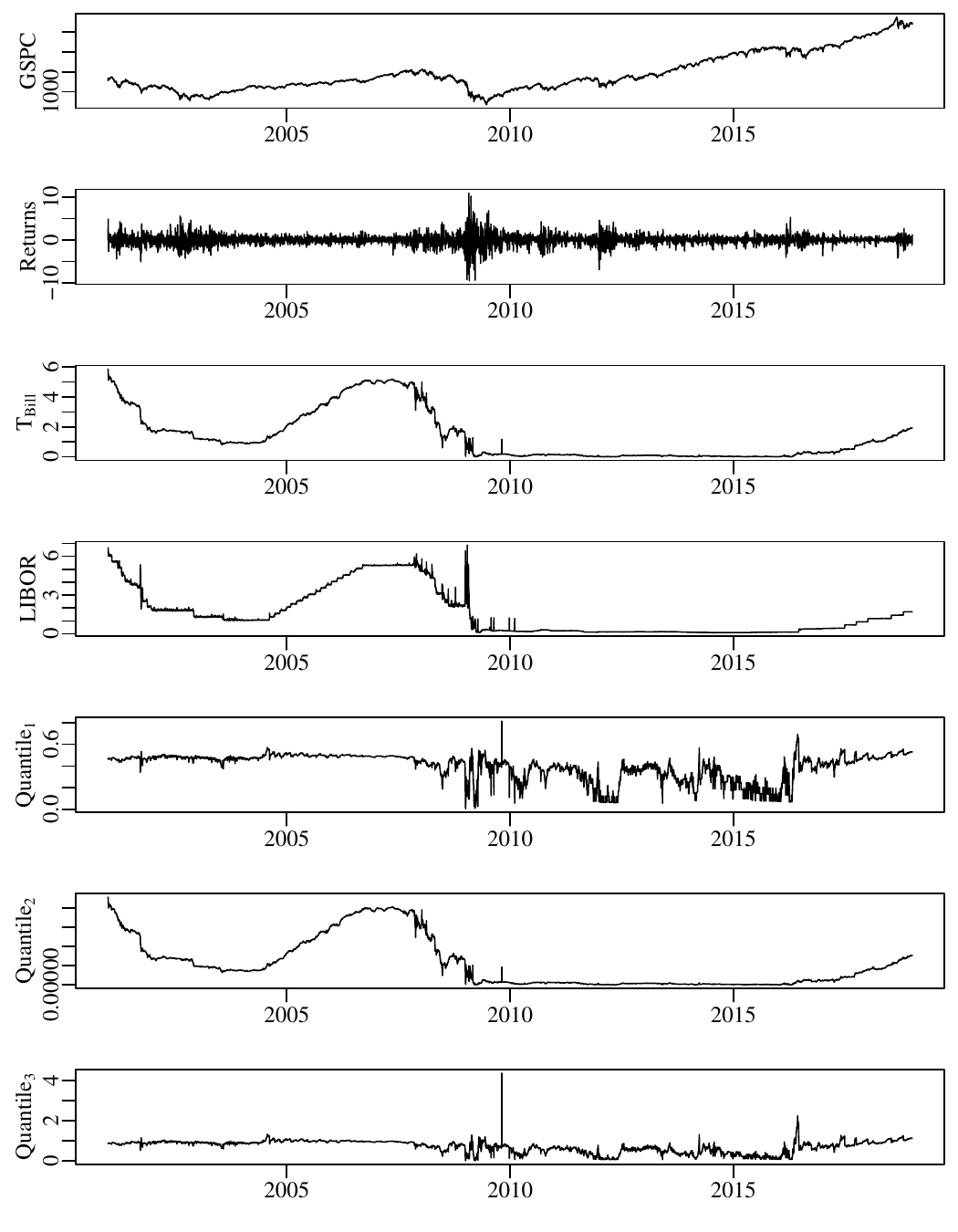}
		\caption{Daily observations from January 2001 to May 2018 of the S\&P 500 adjusted closing price and log-returns ($X$ in \%), yearly three months maturity U.S. Treasury Bill yield ($G$ in \%), yearly U.S. Dollar based Overnight London Interbank Offered Rate ($L$ in \%), and the probability for quantile of $X$ represented by $\frac{G}{G+L}$ ($R_{EL}$), $G$ ($R_{EL}^b$), and $\frac{G}{L}$ ($R_{EL}^c$) -- solutions in the fashion of Proposition \ref{prp:rob}, respectively.}
		\label{fig:plot_datarm}
	
	\end{figure}

	\begin{figure}[H]
		\centering
		\includegraphics[width=1\linewidth]{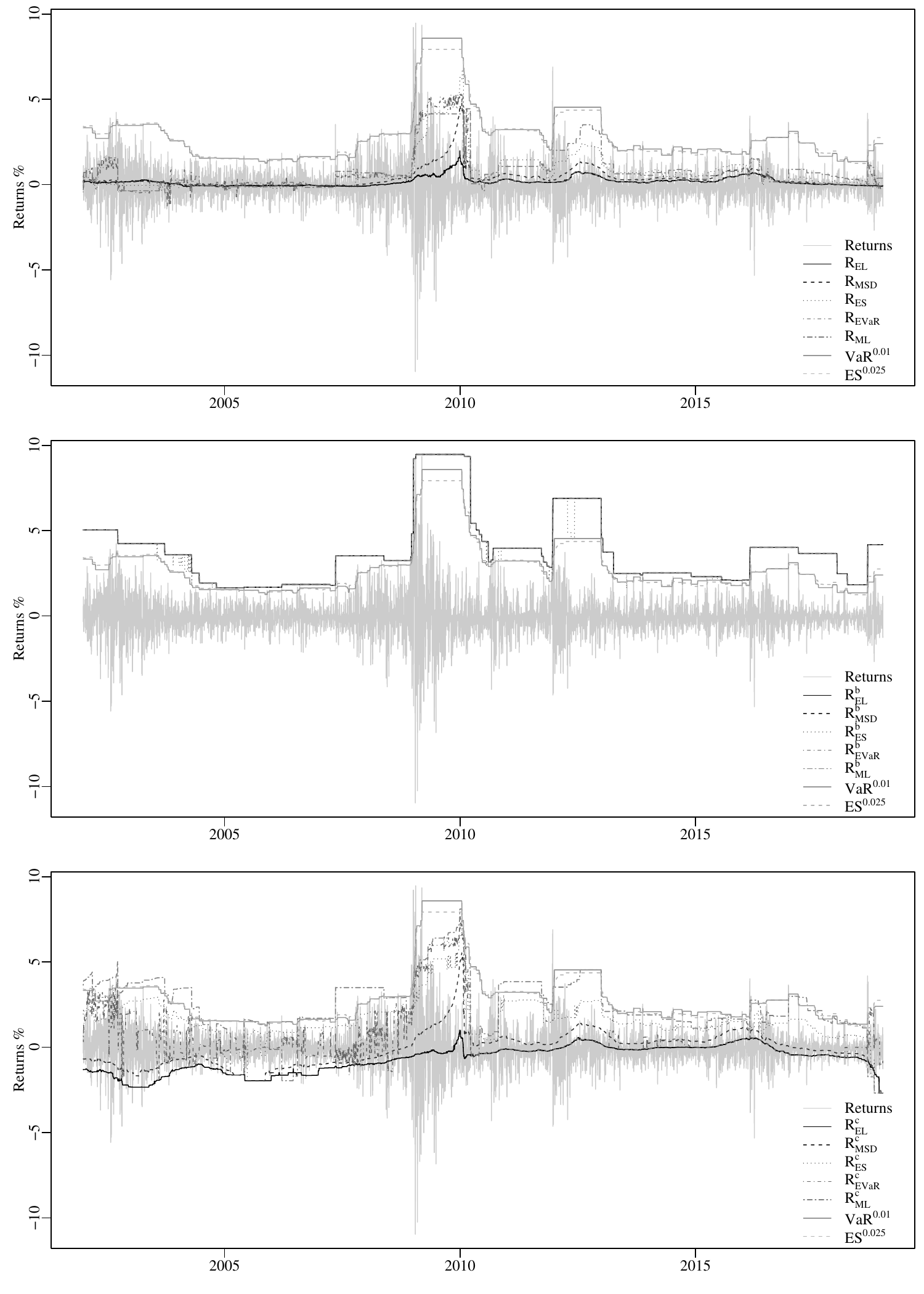}
		\caption{Time series with daily frequency of the log-returns (with adjusted sign) of the S\&P 500 and estimated $R_\rho$, $R_{\rho}^b$, $R_{\rho}^c$, $VaR^{0.01}$ and $ES^{0.025}$ for whole sample, which comprehends January 2001 to May 2018.}
		\label{fig:plot_res}
			
	\end{figure}

\begin{table}[H]
\tiny
		\centering
		\caption{Mean, standard deviation (Dev.), minimum (Min.), maximum (Max.) and cost criterion for daily log-returns of the S\&P 500 and estimated $R_\rho$, $R_{\rho}^b$, $R_{\rho}^c$, $VaR^{0.01}$ and $ES^{0.025}$ for the whole sample (2001 to 2018) and the sub-samples (2001 to  2008 and 2009 to 2018). }
				\begin{center}
		\begin{tabular}{lrrrrrrr}
			\hline
			Whole sample (2001 - 2018) & Mean & Dev. & Min. & Max. & Cost & Cost$^b$ & Cost$^c$\\ 
			\hline
Returns & -0.02 & 1.19 & -10.96 & 9.47 & 0 & $<$ -0.01 & $<$ -0.01 \\
  $R_{EL}$ & 0.17 & 0.26 & -0.12 & 1.99 & 0.16 & 1311.67 & 0.08 \\ 
  $R_{MSD}$ & 0.37 & 0.57 & -0.09 & 4.77 & 0.16 & 1113.30 & 0.07 \\ 
  $R_{ES}$ & 0.70 & 1.08 & -0.32 & 6.74 & 0.16 & 942.94 & 0.10 \\ 
  $R_{EVaR}$ &  0.51 & 1.12 & -1.27 & 5.36 & 0.17 & 1064.63 & 0.11 \\  
  $R_{ML}$ & 0.85 & 1.14 & -0.95 & 4.45 & 0.18 & 823.51 & 0.12 \\
  $R_{EL}^b$ &3.80 & 2.05 & 1.65 & 9.47 & 0.59 & 18.01 & 0.58 \\ 
  $R_{MSD}^b$ & 3.80 & 2.05 & 1.65 & 9.47 & 0.59 & 18.01 & 0.58 \\
  $R_{ES}^b$ & 3.77 & 2.03 & 1.45 & 9.47 & 0.59 & 18.57 & 0.58 \\ 
  $R_{EVaR}^b$ &  3.80 & 2.05 & 1.65 & 9.47 & 0.59 & 18.01 & 0.58 \\
  $R_{ML}^b$ &3.80 & 2.05 & 1.65 & 9.47 & 0.59 & 18.01 & 0.58 \\ 
 $R_{EL}^c$ & -0.69 & 0.77 & -2.68 & 1.01 & 0.33 & 3779.33 & 0.04 \\
 $R_{MSD}^c$  & -0.12 & 1.02 & -2.68 & 5.54 & 0.27 & 2594.78 & 0.05 \\ 
  $R_{ES}^c$ & 1.58 & 1.28 & -2.68 & 6.82 & 0.31 & 592.72 & 0.27 \\ 
  $R_{EVaR}^c$ & 0.59 & 1.59 & -1.93 & 7.70 & 0.21 & 1304.19 & 0.14 \\ 
  $R_{ML}^c$ & 2.43 & 1.80 & -2.68 & 8.11 & 0.49 & 608.32 & 0.43 \\
$VaR^{0.01}$ & 2.86 & 1.67 & 1.26 & 8.58 & 0.44 & 58.24 & 0.43 \\ 
$ES^{0.025}$  & 2.83 & 1.56 & 1.24 & 7.93 & 0.44 & 52.94 & 0.44 \\ 
$ML$ &  3.80 & 2.05 & 1.65 & 9.47 & 0.59 & 18.01 & 0.58 \\ 
   Quantile$_1$ & 0.39 & 0.12 & 0.01 & 0.81 & 0.19 & 1004.44 & 0.15 \\ 
   Quantile$_2$ & $<$ 0.01 & $<$0.01 & $<$0.01 & $<$0.01 & 0.16 & 1559.62 & 0.09 \\ 
   Quantile$_3$ & 0.71 & 0.30 & 0.01 & 4.36 & 0.25 & 759.13 & 0.22 \\   
\hline	
First sub-sample (2001 - 2008)	& Mean & Dev. & Min. & Max. & Cost & Cost$^b$ & Cost$^c$\\ 
\hline	
Returns & 0.02 & 1.35 & -10.96 & 9.47 & 0 & $<$-0.01 & $<$-0.01 \\ 
$R_{EL}$ & 0.03 & 0.14 & -0.12 & 0.61 & 0.14 & 751.40 & 0.07 \\ 
 $R_{MSD}$ & 0.10 & 0.21 & -0.08 & 1.19 & 0.14 & 690.98 & 0.08 \\ 
$R_{ES}$  &0.14 & 0.49 & -0.32 & 2.88 & 0.14 & 683.88 & 0.08  \\ 
$R_{EVaR}$ & 0.30 & 0.73 & -1.27 & 4.41 & 0.15 & 603.69 & 0.10 \\ 
$R_{ML}$  & 0.33 & 0.81 & -0.95 & 4.15 & 0.15 & 592.20 & 0.10 \\ 
$R_{EL}^b$ & 3.26 & 1.65 & 1.65 & 9.47 & 0.49 & 8.65 & 0.49  \\ 
$R_{MSD}^b$ & 3.26 & 1.65 & 1.65 & 9.47 & 0.49 & 8.65 & 0.49 \\ 
$R_{ES}^b$ & 3.24 & 1.65 & 1.45 & 9.47 & 0.49 & 8.65 & 0.49 \\ 
$R_{EVaR}^b$ & 3.26 & 1.65 & 1.65 & 9.47 & 0.49 & 8.65 & 0.49 \\ 
$R_{ML}^b$ & 3.26 & 1.65 & 1.65 & 9.47 & 0.49 & 8.65 & 0.49 \\ 
$R_{EL}^c$ & -1.41 & 0.48 & -2.33 & -0.26 & 0.30 & 2708.23 & 0.04 \\ 
 $R_{MSD}^c$  & -0.93 & 0.55 & -1.95 & 1.01 & 0.24 & 1995.36 & 0.04 \\ 
$R_{ES}^c$ &1.43 & 1.09 & -1.95 & 3.66 & 0.27 & 319.48 & 0.24\\ 
 $R_{EVaR}^c$  & 0.62 & 1.30 & -1.93 & 4.99 & 0.18 & 571.04 & 0.12 \\ 
$R_{ML}^c$  & 2.35 & 1.76 & -1.95 & 5.13 & 0.43 & 345.76 & 0.39 \\ 
$VaR^{0.01}$ &2.46 & 1.21 & 1.26 & 8.58 & 0.37 & 31.27 & 0.37 \\ 
$ES^{0.025}$  & 2.49 & 1.19 & 1.31 & 7.93 & 0.37 & 27.87 & 0.37\\ 
$ML$ & 3.26 & 1.65 & 1.65 & 9.47 & 0.49 & 8.65 & 0.49 \\ 
 Quantile$_1$ & 0.46 & 0.07 & 0.01 & 0.57 & 0.17 & 485.33 & 0.13 \\ 
  Quantile$_2$ & $<$0.01 & $<$0.01 & $<$0.01 & $<$0.01 & 0.14 & 779.46 & 0.08 \\ 
  Quantile$_3$ & 0.89 & 0.17 & 0.01 & 1.33 & 0.22 & 333.13 & 0.20 \\ 
			\hline
Second sub-sample (2009 - 2018) 	& Mean & Dev. & Min. & Max. & Cost & Cost$^b$ & Cost$^c$\\ 
		\hline
		Returns & -0.04 & 0.94 & -5.32 & 6.90 & 0 & $<$-0.01 & $<$-0.01 \\ 
$R_{EL}$  & 0.22 & 0.20 & -0.09 & 0.77 & 0.01 & 467.51 & 0.01 \\ 
$R_{MSD}$ & 0.41 & 0.33 & -0.09 & 1.33 & 0.02 & 373.59 & 0.01 \\ 
$R_{ES}$ & 0.80 & 0.59 & -0.14 & 2.39 & 0.02 & 251.79 & 0.01 \\ 
 $R_{EVaR}$  & 0.24 & 0.22 & -0.24 & 1.37 & 0.02 & 456.50 & 0.01 \\ 
$R_{ML}$ & 0.93 & 0.76 & -0.35 & 3.51 & 0.02 & 223.24 & 0.02 \\ 
 $R_{EL}^b$  & 3.62 & 1.45 & 1.83 & 6.90 & 0.08 & 9.35 & 0.08 \\ 
$R_{MSD}^b$ & 3.62 & 1.45 & 1.83 & 6.90 & 0.08 & 9.35 & 0.08 \\ 
  $R_{ES}^b$L & 3.57 & 1.40 & 1.61 & 6.90 & 0.08 & 9.90 & 0.08 \\ 
$R_{EVaR}^b$ & 3.62 & 1.45 & 1.83 & 6.90 & 0.08 & 9.35 & 0.08 \\ 
$R_{ML}^b$& 3.62 & 1.45 & 1.83 & 6.90 & 0.08 & 9.35 & 0.08 \\ 
$R_{EL}^c$  & -0.15 & 0.45 & -2.68 & 0.58 & 0.02 & 899.79 & $<$0.01 \\ 
$R_{MSD}^c$ & 0.30 & 0.59 & -2.68 & 1.41 & 0.02 & 546.55 & 0.01 \\ 
$R_{ES}^c$ & 1.36 & 0.93 & -2.68 & 3.09 & 0.03 & 269.65 & 0.02 \\ 
$R_{EVaR}^c$ & 0.01 & 0.42 & -1.04 & 2.31 & 0.02 & 725.03 & 0.01 \\ 
$R_{ML}^c$ & 2.10 & 1.37 & -2.68 & 4.55 & 0.05 & 256.51 & 0.03 \\ 
$VaR^{0.01}$ & 2.61 & 0.92 & 1.36 & 4.72 & 0.05 & 26.96 & 0.05 \\ 
$ES^{0.025}$ & 2.57 & 0.88 & 1.24 & 4.59 & 0.05 & 25.05 & 0.05 \\ 
$ML$ &  3.62 & 1.45 & 1.83 & 6.90 & 0.08 & 9.35 & 0.08 \\ 
  Quantile$_1$ & 0.34 & 0.13 & 0.06 & 0.69 & 0.02 & 410.79 & 0.02 \\ 
  Quantile$_2$ & $<$0.01 & $<$0.01 & $<$0.01 & $<$0.01 & 0.02 & 636.13 & 0.01 \\ 
  Quantile$_3$ & 0.57 & 0.30 & 0.06 & 2.24 & 0.03 & 337.17 & 0.02 \\ 
			\hline
		\end{tabular}
		\label{tab:res}
				\end{center}
				
	\end{table}

\end{document}